\definecolor{myurlcolor}{rgb}{0,0,0.4}
\definecolor{mycitecolor}{rgb}{0,0.5,0}
\definecolor{myrefcolor}{rgb}{0.5,0,0}
\newcommand{\beq}[0]{\begin{equation}}
\newcommand{\eeq}[0]{\end{equation}}
\newcommand{\one}{\leavevmode\hbox{\small1\normalsize\kern-.33em1}}
\def\be{\begin{equation}}
\def\ee{\end{equation}}
\def\ben{\begin{eqnarray}}
\def\een{\end{eqnarray}}
\def\eea{\end{array}}
\def\bea{\begin{array}}
\newcommand{\Tr}[1]{\mathrm{Tr}#1}
\newcommand{\bei}{\begin{itemize}}
\newcommand{\eei}{\end{itemize}}
\newcommand{\ket}[1]{|#1\rangle}
\newcommand{\bra}[1]{\langle#1|}
\newcommand{\proj}[1]{\ket{#1}\!\bra{#1}}
\renewcommand{\emph}[1]{\textbf{#1}}
\theoremstyle{plain}
\newtheorem{thm}{Theorem}%
\theoremstyle{definition}
\theoremstyle{remark}
\titleformat*{\section}{\large\bfseries}
\begin{document}

\title{Bell inequalities tailored to the Greenberger-Horne-Zeilinger states of arbitrary local dimension}
\author{R. Augusiak$^1$, A. Salavrakos$^2$, J. Tura$^{3}$, A. Ac\'in$^{2,4}$}

\affil{
$^1$Center for Theoretical Physics, Polish Academy of Sciences,\\ Aleja Lotnik\'ow 32/46, 02-668 Warsaw, Poland\\
$^2$ICFO-Institut de Ciencies Fotoniques, The Barcelona Institute of Science and Technology, \\08860 Castelldefels (Barcelona), Spain\\
$^3$Max-Planck-Institut f\"ur Quantenoptik, Hans-Kopfermann-Stra{\ss}e 1, 85748 Garching, Germany\\
$^4$ICREA, Pg. Lluis Companys 23, 08010 Barcelona, Spain\\
}
\renewcommand\Affilfont{\itshape\small}

\date{}

\maketitle

\begin{abstract}

In device-independent quantum information processing Bell inequalities are not only 
used as detectors of nonlocality, but also as certificates of relevant quantum properties.  
In order for these certificates to work, one very often needs Bell inequalities that are maximally violated by specific quantum states. 
Recently, in [A. Salavrakos \textit{et al.}, \href{https://journals.aps.org/prl/abstract/10.1103/PhysRevLett.119.040402}{Phys. Rev. Lett. \textbf{119}, 040402 (2017)}] a general class of Bell inequalities, with arbitrary numbers of measurements and outcomes, has been designed, which are maximally violated by the maximally entangled states of two quantum systems of arbitrary dimension. In this work, we generalize these results to the multipartite scenario and obtain a general class of Bell inequalities maximally violated by the Greenberger-Horne-Zeilinger states of any number of parties and any local dimension. We then derive analytically their maximal quantum and nonsignaling values. We also obtain analytically the bound for detecting genuine nonlocality and compute the fully local bound for a few exemplary cases. Moreover, we consider the question of adapting this class of inequalities to partially entangled GHZ-like states for some special cases of low dimension and small number of parties. Through numerical methods, we find classes of inequalities maximally violated by these partially entangled states.
\end{abstract}

\section{Introduction}

Bell inequalities \cite{Bell} have traditionally been used as witnesses of nonlocality in composite quantum systems, but with the advent of device-independent quantum information processing they gained a completely new role as certificates of relevant quantum properties. It is nowadays a well-established fact that the violation of Bell inequalities not only certifies the presence of entanglement in a device-independent way, but it can also certify e.g. that true randomness has been generated in the process of measuring a quantum system \cite{randomness1}. Among their certification properties, Bell inequalities may serve as device-independent witnesses of the minimal Hilbert space dimension of the underlying quantum system \cite{dimwit}. The maximum exponent of their certification power is known as self-testing \cite{MayersYao}, which allows one to conclude the state and measurements performed solely from the observation of the maximal violation of certain Bell inequalities (see, e.g., Refs. \cite{selftesting}).

In many of these device-independent applications, in particular in randomness certification \cite{randomness1} or self-testing \cite{MayersYao,selftesting}, one needs Bell inequalities whose maximal quantum values are known along with the quantum realisation (that is, a quantum state and quantum measurements) achieving them. This is not an easy task in general, because, phrasing alternatively, one needs Bell inequalities maximally violated by specific quantum states and/or specific quantum measurements. While many constructions of Bell inequalities, both in the bipartite and multipartite cases (see, e.g., Refs. \cite{CHSH,tilted,CGLMP,BM05,BKP,VP08,Ji08,Liang09,Wiesiek,GraphBell,SLK06,Lim10,Aolita}), have been proposed to date, the quantum realisation maximally violating these inequalities is characterized only for a proper subset of them, and most of these inequalities involve two-outcome measurements. 
In the bipartite case these are for instance: the Clauser-Horne-Shimony-Holt (CHSH) Bell inequality \cite{CHSH}, which is maximally violated by the maximally entangled state of two qubits, its generalization, called the \textit{tilted} CHSH \cite{tilted}, which is maximally violated by any partially entangled two-qubit state, and the generalizations of the CHSH Bell inequality to inequalities maximally violated by the maximally entangled state of arbitrary local dimension and various measurements \cite{our,Jed,Andrea}, devised only recently. 
Moving to the multipartite case, examples of Bell inequalities for which the realization of the maximal quantum violation is known are: the Mermin Bell inequality \cite{Mermin}, the class of Bell inequalities maximally violated by the multiqubit graph states \cite{GraphBell} (see also Ref. \cite{ourGraph} for the recent alternative construction), or a class of two-setting Bell inequalities introduced in Ref. \cite{SLK06} and tailored to the $N$-partite Greenberger-Horne-Zeilinger states of arbitrary local dimension 
\begin{equation}\label{GHZ}
|\mathrm{GHZ}_{N,d}\rangle=\frac{1}{\sqrt{d}}\sum_{i=0}^{d-1}\ket{i}^{\otimes N},
\end{equation}
for which the maximal quantum violation was determined only later in 
Ref. \cite{LCL07}.

The main aim of this work is to design a new family of Bell inequalities for
which one can efficiently determine the maximal quantum violation along with the 
quantum realisation achieving it. We provide a general class of multipartite
Bell inequalities valid for any number of measurements and outcomes whose maximal quantum 
violation is attained by the GHZ state of $N$ qudits (\ref{GHZ}). To this end, we exploit and, at the same time extend to the multipartite scenario, the approach in Ref. \cite{our} to construct Bell inequalities for the maximally entangled state of two qudits $|\mathrm{GHZ}_{2,d}\rangle$. Noticeably, this approach exploits the properties of the quantum state and measurements to derive Bell inequalities, rather than the geometry of the set of local correlations. We then characterize the obtained inequalities: (i) first, we compute their maximal classical values for the simplest multipartite scenarios 
(note that for $N=2$ these values were already computed analytically in Ref. \cite{our}), (ii) we detemine their maximal quantum value by finding a sum of squares decomposition of the corresponding Bell operator, and (iii) compute their maximal nonsignaling values. 
In the spirit of Refs. \cite{our, Science}, we finally discuss generalizations of our Bell inequalities to certain partially entangled multipartite states.

Noticeably, our class of Bell inequalities reproduces the two-setting Bell inequalities
introduced in Ref. \cite{SLK06} and later studied in Ref. \cite{LCL07}. On the other hand, it belongs to a broader class of multipartite Bell functionals considered in Ref. \cite{Bancal}.
Nevertheless, this last work, although it reproduces notable inequalities such as the ones presented in Ref. \cite{AGCA}, it does not single out the class of Bell inequalities nor the properties we provide in this work.
Moreover, here we provide a different approach to compute the maximal quantum violation that is based on the sum-of-squares decomposition of the Bell operator.

The manuscript is organized as follows. In Sec. \ref{sec:preliminaries}
we recall all the relevant notions for further considerations. 
In Sec. \ref{sec:construction} we derive our family of Bell inequalities, whereas in 
Sec. \ref{sec:characterization} we characterize them by providing their maximal
classical (numerically, in the simplest scenarios), quantum and nonsignaling values.
In Sec. \ref{sec:generalizations} we put forward possible generalizations of our construction
to partially entangled GHZ multiqudit states, and we present our conclusions in Sec. \ref{sec:conclusion}.

\section{Preliminaries}
\label{sec:preliminaries}

\textit{Bell scenario and correlations.}
Let us consider a Bell scenario in which $N$ distant parties $A_1,\ldots,A_N$
share some physical system. In each round of the experiment, each party $A_i$ performs
one of $m$ measurements on their share of this system, 
and each measurement yields one of $d$ outcomes. We label the measurement choices and outcomes of party $A_i$ by $x_i\in\{1,\ldots,m\}$ and $a_i\in\{0,\ldots,d-1\}$, respectively, while $A_{i,x_i}$ denotes the implemented measurement. Such measurements lead to correlations that are described by a collection of conditional probability distributions  
\begin{equation}\label{collection}
\{p(a_1,\ldots,a_N|x_1,\ldots,x_N)\}_{a_1,\ldots,a_N;x_1,\ldots,x_N},
\end{equation}
in which $p(\boldsymbol{a}|\boldsymbol{x}):= p(a_1,\ldots,a_N|x_1,\ldots,x_N)$ stands for the probability of obtaining outcomes $\boldsymbol{a}:= (a_1,\ldots, a_N)$ upon performing measurements $\boldsymbol{x}:= (x_1,\ldots, x_N)$ by the parties. These probabilities are typically ordered into a vector 
\begin{equation}
\vec{p}=\{p(a_1,\ldots,a_N|x_1,\ldots,x_N)\}_{a_1,\ldots,a_N;x_1,\ldots,x_N}\in \mathbbm{R}^{(md)^N}.
\end{equation}
By slightly abusing terminology we also call the collection (\ref{collection}) correlations.

Now, the set of allowed vectors $\vec{p}$ varies depending on the physical principle 
they obey. First, let us consider correlations that satisfy the no-signaling principle
which prohibits faster-than-light communication between parties. Mathematically, this is equivalent to saying that the conditional probabilities $p(\mathbf{a}|\mathbf{x})$
satisfy the following set of linear constraints
\begin{equation}
\sum_{a_i}p(a_1,\ldots,a_i,\ldots,a_N|x_1,\ldots,x_i,\ldots,x_N)=
\sum_{a_i}p(a_1,\ldots,a_i,\ldots,a_N|x_1,\ldots,x_i',\ldots,x_N)
\label{eq:ns}
\end{equation}
for all $x_i,x_i'$ and $a_1,\ldots,a_{i-1},a_{i+1},\ldots,a_N$ and $x_1,\ldots,x_{i-1},x_{i+1},\ldots,x_N$ and all $i$. Correlations obeying the no-signaling principle form a convex polytope that we will denote by $\mathcal{N}_{N,m,d}$.

The polytope $\mathcal{N}_{N,m,d}$ contains the set of quantum correlations, which are those that can be represented as 
\begin{equation}\label{quantum}
p(\mathbf{a}|\mathbf{x})=\Tr[\rho_N (M_{1,x_1}^{a_1}\otimes\ldots\otimes M_{N,x_N}^{a_N})]
\end{equation}
for some $N$-partite quantum state $\rho_N$ of generally unconstrained dimension and local positive semi-definite measurement operators $M_{i,x_i}^{a_i}$ that define the $x_i$th measurement (with outcome $a_i$) performed by party $A_i$. Since $M_{i,x_i}^{a_i}$ are positive-operator valued measure (POVM) elements, they form a resolution of the identity: $\sum_{a_i}M_{i,x_i}^{a_i}=\mathbbm{1}$.
Similarly to the nonsignaling set, the quantum set $\mathcal{Q}_{N,m,d}$ is also convex, however, it is not a polytope. Moreover, as shown in Ref. \cite{PR}, $\mathcal{Q}_{N,m,d}$, is a proper subset of $\mathcal{N}_{N,m,d}$ as there exist correlations obeying the no-signaling principle that do not have the above quantum realization (\ref{quantum}).

Finally, the set of correlations that admit the local hidden variable (LHV) models is formed by those $\vec{p}$ for which every $p(\boldsymbol{a}|\boldsymbol{x})$ can be written as a convex combination of 
product deterministic correlations, that is,
\begin{equation}
p(\mathbf{a}|\mathbf{x})=\sum_{\lambda}p(\lambda) p_{A_1}(a_1|x_1,\lambda)\cdot\ldots\cdot p_{A_N}(a_N|x_N,\lambda).
\end{equation}
Here $\lambda$ is some classical information (which can also be interpreted as a hidden variable or shared randomness) and $p(a_i|x_i,\lambda)\in\{0,1\}$ for all 
$a_i,x_i$ and $\lambda$. In what follows we will also refer to correlations admitting the above representation as to local or classical. Likewise the nonsignalling set, the local set is a polytope whose vertices are product of deterministic correlations, i.e., $p(\mathbf{a}|\mathbf{x})=p_{A_1}(a_1|x_1)\cdot\ldots\cdot p_{A_N}(a_N|x_N)$ with each $p_{A_i}(a_i|x_i)\in \{0,1\}$.

It is important to notice that $\mathcal{L}_{N,m,d}$ is a proper subset of $\mathcal{Q}_{N,m,d}$ and Bell was the first to prove that not all quantum correlations admit an LHV model. To this end, he used the concept of Bell inequalities---linear inequalities constraining the local polytope $\mathcal{L}_{N,m,d}$ that take the general form
\begin{equation}
I:=\sum_{\mathbf{a},\mathbf{x}}T_{\mathbf{a},\mathbf{x}}\,p(\mathbf{a}|\mathbf{x})\leq \beta_{\mathcal{L}},
\end{equation}
where $T_{\mathbf{a},\mathbf{x}}$ is a table of real numbers, while $\beta_{\mathcal{L}}$
is the so-called classical bound of the Bell inequality defined as $\beta_{\mathcal{L}}=\max_{\vec{p}\in\mathcal{L}_{N,m,d}}I$. Analogously, let us denote by $\beta_{\mathcal{Q}}$ and $\beta_{\mathcal{N}}$, respectively, the maximal quantum and nonsignaling values of $I$, i.e., 
\begin{equation}
\beta_{\mathcal{Q}}=\sup_{\vec{p}\in \mathcal{Q}_{N,m,d}}I
,\qquad\beta_{\mathcal{N}}=\max_{\vec{p}\in \mathcal{N}_{N,m,d}}I,
\end{equation}
where the supremum stems from the fact that 
the set of quantum correlations is in general not closed
\cite{Slofstra}.

For most of known Bell inequalities 
$\beta_{\mathcal{L}}<\beta_{\mathcal{Q}}<\beta_{\mathcal{N}}$. In particular, if $\beta_{\mathcal{L}}<\beta_{\mathcal{Q}}$ for some $I$, we call the corresponding Bell inequality proper. Finally, the violation of a Bell inequality by some correlations $\vec{p}$ implies that $\vec{p}$ does not admit the LHV model, in which case it is called nonlocal. 

In the multipartite case ($N > 2$), yet another set of correlations can be considered--the set of Svetlichny correlations \cite{Svetlichny}. To define it formally, let us group the parties $A_1,\ldots,A_N$
into two disjoint subsets $G$ and $\bar{G}=\{A_1,\ldots,A_N\}\setminus G$ such that 
$G\neq \emptyset$. Now, the correlations $\vec{p}$ are called bilocal with respect 
to the bipartition $G|\bar{G}$ if
\begin{equation}
p_{G|\bar{G}}(\mathbf{a}|\mathbf{x})=\sum_{\lambda}p(\lambda)p_{G}(\mathbf{a}_G|\mathbf{x}_{G},\lambda)p_{\bar{G}}(\mathbf{a}_{\bar{G}}|\mathbf{x}_{\bar{G}},\lambda),
\end{equation}
where $\mathbf{a}_{G}$ ($\mathbf{a}_{\bar{G}}$) and $\mathbf{x}_G$ ($\mathbf{x}_{\bar{G}}$) are outcomes and measurement choices corresponding to the observers from $G$ ($\bar{G}$), whereas $p_{G}(\boldsymbol{a}_G|\boldsymbol{x}_{G},\lambda)$ are any probability distributions corresponding to the parties in $G$. We then call $\vec{p}$ bilocal if $p(\mathbf{a}|\mathbf{x})$ can be written as a convex combination of $p_{G|\bar{G}}(\mathbf{a}|\mathbf{x})$ that are bilocal with respect to various bipartitions $G|\bar{G}$. On the other hand, if $\vec{p}$ does not admit the above form, the we call such correlations genuinely multipartite nonlocal. 

In a given Bell scenario $(N,m,d)$ bilocal correlations form a convex set $\mathcal{S}_{N,m,d}$, and for a given Bell expression $I$ we denote by $\beta_\mathcal{S}$ its maximal value over $\mathcal{S}_{N,m,d}$, that is, $\beta_{\mathcal{S}}=\max_{\vec{p}\in\mathcal{S}_{N,m,d}}I$. Violation of $I\leq \beta_{\mathcal{S}}$ by some quantum correlations
$\vec{p}$ indicates that these correlations are genuinely multipartite nonlocal.

Let us stress here that the above definition of bilocality was proven to be inconsistent with the operational interpretation of nonlocality \cite{definition1,definition2}, and to recover consistency it is enough to require that $p_{G}(\mathbf{a}_G|\mathbf{x}_{G},\lambda)$ and $p_{\bar{G}}(\mathbf{a}_{\bar{G}}|\mathbf{x}_{\bar{G}},\lambda)$ obey the no-signaling principle. Nevertheless, it is still of interest to consider the Svetlichny definition of bilocality as any quantum correlations that are genuinely nonlocal according to 
it are also genuinely nonlocal according to the definitions of Ref. \cite{definition1,definition2}. \\

\noindent\textit{A particular multipartite Bell expression.} Let us illustrate the above concepts with the following example of a multipartite Bell expression introduced in Refs. \cite{Aolita,Bancal}, which we state here in the probability picture as
\begin{equation}\label{MultipBI}
I_{\mathrm{ex}}:=\sum_{n=0}^{\lfloor d/2\rfloor-1}\left[\left(1-\frac{2n}{d-1}\right)\left(\mathbbm{P}_n-\mathbbm{Q}_n\right)\right],
\end{equation}
where $\mathbb{P}_n$ and $\mathbb{Q}_n$ are expressions given explicitly by
\begin{equation}\label{Pm}
\mathbbm{P}_n=\sum_{\alpha_1,\ldots,\alpha_{N-1}=1}^{m}\left[ P(X_{\alpha_1,\ldots,\alpha_{N-1}}=n)+P(\overline{X}_{\alpha_1,\ldots,\alpha_{N-1}}=n)\right]
\end{equation}
and
\begin{equation}\label{Qm}
\mathbbm{Q}_n=\sum_{\alpha_1,\ldots,\alpha_{N-1}=1}^{m}\left[ P(X_{\alpha_1,\ldots,\alpha_{N-1}}=-n-1)+P(\overline{X}_{\alpha_1,\ldots,\alpha_{N-1}}=-n-1)\right],
\end{equation}
where $X$ and $\overline{X}$ are linear combinations of the variables $A_{i,x_i}$ defined as
\begin{equation}\label{X}
X_{\alpha_1,\ldots,\alpha_{N-1}}=
A_{1,\alpha_1}+\sum_{j=2}^{N}(-1)^{j-1}A_{j,\alpha_{j-1}+\alpha_j-1}
\end{equation}
and
\begin{equation}\label{X'}
\overline{X}_{\alpha_1,\ldots,\alpha_{N-1}}=
-A_{1,\alpha_1+1}+\sum_{j=2}^{N}(-1)^{j}A_{j,\alpha_{j-1}+\alpha_j-1},
\end{equation}
where we use the convention that $A_{j,m+\gamma}=A_{j,\gamma}+1$ for any $\gamma=1,\ldots,m$ and any $j=1,\ldots,N$, and $\alpha_N:=1$. Moreover, all the equations $X_{\alpha_1,\ldots,\alpha_{N-1}}=k$ or $\overline{X}_{\alpha_1,\ldots,\alpha_{N-1}}=k$ in Eqs. (\ref{Pm}) and (\ref{Qm}) are to be taken \textit{modulo} $d$. While, the maximal classical value of 
$I_{\mathrm{ex}}$ is in general unknown, its Svetlichny bound is straightforward to determine and amounts to $\beta_{\mathcal{S}}^{\mathrm{ex}}=m^{N-2}(m-1)$ \cite{Aolita,Bancal}.

It was proven in Ref. \cite{Aolita} that the Bell inequality (\ref{MultipBI}) 
is violated (but not maximally) by the $N$-qudit GHZ state $\ket{\mathrm{GHZ}_{N,d}}$
together with the following observables
\begin{equation}\label{measurements0}
\mathscr{A}_{1,x}=U_{x}F_d\Omega_dF_d^{\dagger}U_{x}^{\dagger},\qquad
\mathscr{A}_{2,x}=V_{x}F_d^{\dagger}\Omega_dF_dV_{x}^{\dagger},
\end{equation}
for the first two parties, and  
\begin{eqnarray}\label{measurements}
\mathscr{A}_{3,x}&=&W_{x}F_d\Omega_dF_d^{\dagger}W_{x}^{\dagger}\nonumber\\
&\vdots&\nonumber\\
\mathscr{A}_{N-1,x}&=&
\left\{
\begin{array}{ll}
W_{x}F_d\Omega_dF_d^{\dagger}W_{x}^{\dagger}, & \,N\,\,\mathrm{even}\\[1ex]
W_{x}^{\dagger}F_d^{\dagger}\Omega_dF_d W_{x}, & \,N\,\,\mathrm{odd}\\
\end{array}
\right.\nonumber\\
\mathscr{A}_{N,x}&=&\left\{
\begin{array}{ll}
W_{x}^{\dagger}F_d^{\dagger}\Omega_dF_d W_{x}, & \,N\,\,\mathrm{even}\\[1ex]
W_{x}F_d\Omega_dF_d^{\dagger}W_{x}^{\dagger}, & \,N\,\,\mathrm{odd}\\
\end{array}
\right.
\end{eqnarray}
for the remaining parties, where  
\begin{equation}
F_d=\frac{1}{\sqrt{d}}\sum_{i,j=0}^{d-1}\omega^{ij}\ket{i}\!\bra{j},\qquad
\Omega_d=\mathrm{diag}[1,\omega,\ldots,\omega^{d-1}]
\end{equation}
and
\begin{equation}
U_{x}=\sum_{j=0}^{d-1}\omega^{-j \gamma_{m}(x)}\proj{j},\qquad
V_x=\sum_{j=0}^{d-1}\omega^{j \zeta_{m}(x)}\proj{j},\qquad
W_x=\sum_{j=0}^{d-1}\omega^{-j \theta_{m}(x)}\proj{j},\qquad
\end{equation}
with $\gamma_{m}(x)=(x-1/2)/m$, $\zeta_{m}(x)=x/m$, and $\theta_{m}(x)=(x-1)/m$ for $x=1,\ldots,m$. 
Notice that for the case $N=m=2$, these reproduce the optimal CGLMP observables \cite{CGLMP}
and we will use them to construct our Bell inequalities. Moreover, for these observables 
and the state $\ket{\mathrm{GHZ}_{N,d}}$, all the probabilities appearing in 
both $\mathbbm{P}_n$ and $\mathbbm{Q}_n$ [cf. Eqs. (\ref{Pm}) and (\ref{Qm})], that is, 
\begin{equation}
P(X_{\alpha_1,\ldots,\alpha_{N-1}}=n)\qquad \mathrm{and}\qquad
P(\overline{X}_{\alpha_1,\ldots,\alpha_{N-1}}=n)
\label{eq:ToBeProvedInTheAppendix}
\end{equation}
are independent of the choice of $\alpha_1,\ldots,\alpha_{N-1}$ and are equal for any $n=0,\ldots,d-1$, that is, $P(X_{\alpha_1,\ldots,\alpha_{N-1}}=n)=P(\overline{X}_{\alpha_1,\ldots,\alpha_{N-1}}=n)$ (see Appendix for the proof). We will later exploit these properties in our construction of Bell inequalities.\\

\noindent\textit{Generalized correlators.} For further purposes we notice that correlations (\ref{collection}) can also be equivalently represented by correlators instead of conditional probabilities. However, due to the fact that here we consider an arbitrary number of outcomes, we need to appeal to generalized correlators. These are in general complex numbers that are defined through the $N$-dimensional 
discrete Fourier transform of the probabilities $p(\mathbf{a}|\mathbf{x})$ according to the following formula 
\begin{equation}\label{ComplCorr}
\left\langle \mathscr{A}_{1,x_1}^{(k_1)}\ldots \mathscr{A}_{N,x_N}^{(k_N)}\right\rangle=\sum_{\mathbf{k}}\omega^{\mathbf{a}\cdot \mathbf{k}}\,p(\mathbf{a}|\mathbf{x}),
\end{equation}
where the sum is over all $N$-tuples $\mathbf{k}=(k_1,\ldots, k_N)$ with each $k_i=0,\ldots,d-1$ and $\omega=\mathrm{exp}(2\pi \mathbbm{i}/d)$ is the $d$th root of unity and $\mathbbm{i}^2+1=0$. 
It is then not difficult to see that $|\langle \mathscr{A}_{1,x_1}^{(k_1)}\ldots \mathscr{A}_{N,k_N}^{(k_N)}\rangle|\leq 1$ for all configurations of $x_1,\ldots,x_N$ and $k_1,\ldots,k_N$. Moreover, if $k_i=0$ for some $i$, then the no-signalling principle (cf. Eq. \ref{eq:ns}) allows one to rewrite.
\begin{equation}
\left\langle \mathscr{A}_{1,x_1}^{(k_1)}\ldots \mathscr{A}_{N,x_N}^{(k_N)}\right\rangle=\left\langle \mathscr{A}_{1,x_1}^{(k_1)}\ldots \mathscr{A}_{i-1,x_{i-1}}^{(k_{i-1})} \mathscr{A}_{i+1,x_{i+1}}^{(k_{i+1})}\ldots \mathscr{A}_{N,x_N}^{(k_N)}\right\rangle.
\end{equation}
In particular, $\langle \mathscr{A}_{1,x_1}^{(0)}\ldots \mathscr{A}_{1,x_N}^{(0)}\rangle=1$ for any sequence of $x_i$. The inverse transformation gives 
\begin{equation}
p(\mathbf{a}|\mathbf{x})=\frac{1}{d^N}\sum_{\boldsymbol{k}}\omega^{-(\mathbf{a}\cdot\mathbf{k})}\left\langle \mathscr{A}_{1,x_1}^{(k_1)} \ldots \mathscr{A}_{N,x_N}^{(k_N)}\right\rangle.
\end{equation}
We remark that we have used a different letter to denote the new variables instead of $A_{i,x_i}$ because the values that these new variables take are from the regular $d$-gon in the complex plane, with vertices $\{1,\ldots,\omega^{d-1}\}$. Note that for $d=2$, the correlators take values from $[-1,1]$. We also note that for $d>2$ these values are not completely independent: for instance, since the conditional probabilities $p(\mathbf{a}|\mathbf{x})$ are real, its discrete Fourier transform will satisfy the relations 
\begin{equation}
\left\langle \mathscr{A}_{1,x_1}^{(k_1)}\ldots \mathscr{A}_{N,x_N}^{(k_N)}\right\rangle = \left\langle \mathscr{A}_{1,x_1}^{(d-k_1)}\ldots \mathscr{A}_{N,x_N}^{(d-k_N)}\right\rangle^*,
\end{equation}
where $z^*$ denotes the complex conjugate of $z$ and the $k_i$'s are taken \textit{modulo} $d$.

In the case of quantum correlations $p(\mathbf{a}|\mathbf{x})$,
$\mathscr{A}_{i,x_i}$ can be seen as the Fourier transform of the measurement operators $M_{i,x_i}^{a_i}$, that is,
\begin{equation}
\mathscr{A}_{i,x}^{(k)}=\sum_{a}\omega^{ak}M_{i,x}^{a}.
\end{equation}
Phrasing alternatively, one can think of the operators $\mathscr{A}_{i,x_i}^{(k_i)}$ with $k_i=0,\ldots,d-1$ as an observable representation of a $d$-outcome measurement $\{M_{i,x_i}^{a_i}\}$ with outcomes labelled by $1,\omega,\ldots,\omega^{d-1}$. 
Let us also notice that if $M_{i,x}^{a}$ correspond to a projective measurement, then 
$\mathscr{A}_{i,x}$ are unitary operators with eigenvalues $1,\ldots,\omega^{d-1}$
(or, equivalently, they satisfy $\mathscr{A}_{i,x}^d=\mathbbm{1}$). Moreover, in such a case
$\mathscr{A}_{i,x}^{(k)}$ are simply $k$th powers of $\mathscr{A}_{i,x}$, i.e., $\mathscr{A}_{i,x}^{(k)}\equiv \mathscr{A}_{i,x_i}^k$.

\section{The construction}
\label{sec:construction}

Let us now move on to our construction of Bell inequalities maximally violated by the GHZ states (\ref{GHZ}) of any local dimension $d$ and any number of parties $N$. To this end we follow the approach introduced in Ref. \cite{our} to derive Bell inequalities maximally violated by the maximally entangled states of two qudits. Moreover, we impose the condition that the maximal quantum values of the Bell inequalities we here derive are by design obtained for the optimal measurements (\ref{measurements}). 

\subsection{The case of three observers ($N=3$)}

To make our considerations more accessible, we first present our construction in the case of  three observers ($N=3$). As already mentioned, we now denote the parties $A$, $B$ and $C$, whereas their outcomes and measurements choices as $a,b,c$ and $x,y,z$, respectively. The departure point of our considerations is the following generalization of the Bell expression in Eq. (\ref{MultipBI}), 
\begin{equation}\label{BellExp}
I_{3,m,d}=\sum_{n=0}^{\lfloor d/2\rfloor-1}
\left(\alpha_n\mathbb{P}_n-\beta_{n}\mathbb{Q}_n\right).
\end{equation}
Here, the variables defined in Eqs. (\ref{X}) and (\ref{X'}) simplify to
\begin{equation}\label{ParticularCase}
X_{\alpha,\beta}=A_{\alpha}-B_{\alpha+\beta-1}+C_{\beta},
\qquad \overline{X}_{\alpha,\beta}=-A_{\alpha+1}+B_{\alpha+\beta-1}-C_{\beta}.
\end{equation}
and therefore $\mathbb{P}_n$ and $\mathbb{Q}_n$ can be written as
\begin{equation}\label{Pm3}
\mathbb{P}_n=\sum_{\alpha,\beta=1}^{m}
\left[P(A_{\alpha}-B_{\alpha+\beta-1}+C_{\beta}=n)+P(B_{\alpha+\beta-1}-A_{\alpha+1}-C_{\beta}=n)\right]
\end{equation}
and
\begin{equation}\label{Qm3}
\mathbb{Q}_n=\sum_{\alpha,\beta=1}^{m}
\left[P(A_{\alpha}-B_{\alpha+\beta-1}+C_{\beta}=-n-1)+P(B_{\alpha+\beta-1}-A_{\alpha+1}-C_{\beta}=-n-1)\right].
\end{equation}
Moreover, $\alpha_n$ and $\beta_n$ are our free parameters that we are going to determine. Notice that for $\alpha_n=\beta_n=[1-2n/(d-1)]$, one recovers the multipartite Bell inequalities (\ref{MultipBI}). Notice also that the reason 
to consider the above generalization of (\ref{MultipBI}), in which the free parameters multiply 
$\mathbbm{P}_n$ and $\mathbbm{Q}_n$ stems from the fact that
for the state $|\mathrm{GHZ}_{N,d}\rangle$ and the optimal 
measurements (\ref{measurements0}) and (\ref{measurements}), which we later use to 
find our inequalities, all probabilities contributing to $\mathbbm{P}_n$ and $\mathbbm{Q}_n$
are equal (see Appendix for the proof).

We now want to exploit these degrees of freedom in order to construct Bell inequalities maximally violated by $\ket{\mathrm{GHZ}_{3,d}}$.
However, in order to fully appreciate the symmetries inherent in Bell 
inequalities and thus significantly simplify our 
considerations we will write the Bell expression (\ref{BellExp})
in terms of complex correlators (\ref{ComplCorr}) instead of probabilities. After some short algebra one finds that 
\begin{equation}\label{prawd}
P(A_{\alpha}-B_{\alpha+\beta-1}+C_{\beta}=\xi)=\frac{1}{d}\sum_{k=0}^{d-1}\omega^{-k\xi}
\left\langle \mathscr{A}_{\alpha}^k\,\mathscr{B}_{\alpha+\beta-1}^{-k}\,\mathscr{C}_{\beta}^k\right\rangle
\end{equation}
and
\begin{equation}\label{prawd}
P(-A_{\alpha+1}+B_{\alpha+\beta-1}-C_{\beta}=\xi)=\frac{1}{d}\sum_{k=0}^{d-1}\omega^{k\xi}
\left\langle \mathscr{A}_{\alpha+1}^k\,\mathscr{B}_{\alpha+\beta-1}^{-k}\,\mathscr{C}_{\beta}^k\right\rangle.
\end{equation}
With the aid of these formulas our Bell expression (\ref{BellExp}) can be conveniently 
rewritten as
\begin{eqnarray}\label{BellExpCC}
I_{3,m,d}&=&
\frac{1}{d}\sum_{\alpha,\beta=1}^{m}\sum_{k=0}^{d-1}
\left\langle \overline{\mathscr{A}}_{\alpha}^{(k)}\,\mathscr{B}_{\alpha+\beta-1}^{-k}\,\mathscr{C}_{\beta}^k\right\rangle,
\end{eqnarray}
where the new variables $\overline{\mathscr{A}}_{\alpha}^{(k)}$ are defined as
\begin{equation}\label{barowane}
\overline{\mathscr{A}}_{\alpha}^{(k)}=a_k\, \mathscr{A}_{\alpha}^k+a_k^*\, \mathscr{A}_{\alpha+1}^k
\end{equation}
with 
\begin{equation}\label{conditions}
a_k=\sum_{n=0}^{\lfloor d/2\rfloor-1}\left[\alpha_n\omega^{-kn}-\beta_n\omega^{k(n+1)}\right].
\end{equation}
Notice here that as, due to the convention, $\mathscr{A}_{m+1}=\omega \mathscr{A}_{1}$, and therefore in the particular case of $\alpha=m$, Eq. (\ref{barowane}) reads
$\overline{\mathscr{A}}_{m}^{(k)}=a_k\, \mathscr{A}_{m}^k+a_k^* \,\omega^{k} \mathscr{A}_{1}^k$. Let us also notice that the term in Eq. (\ref{BellExpCC}) corresponding to $k=0$ is a constant and therefore it is not included in the Bell expression.

Now, to fix our free parameters $\alpha_n$ and $\beta_n$ $(n=0,\ldots,\lfloor d/2\rfloor-1)$ we
require that for the optimal observables (\ref{measurements}) the following conditions 
\begin{equation}\label{system}
\overline{\mathscr{A}}_{\alpha}^{(k)}\otimes \mathscr{B}_{\alpha+\beta-1}^{-k}\otimes \mathscr{C}_{\beta}^k\ket{\mathrm{GHZ}_{3,d}}=\ket{\mathrm{GHZ}_{3,d}}
\end{equation}
are satisfied for all $\alpha,\beta=1,\ldots,m$ and $k=1,\ldots,d-1$. In other words, we want to find such $\alpha_n$ and $\beta_n$ that the resulting
operator $\overline{\mathscr{A}}_{\alpha}^{(k)}\otimes \mathscr{B}_{\alpha+\beta-1}^{-k}\otimes \mathscr{C}_{\beta}^k$ stabilizes the GHZ state $\ket{\mathrm{GHZ}_{3,d}}$, or that the GHZ state is its eigenstate with eigenvalue one. To solve the above equations we need the explicit forms of the $k$th powers of the measurements (\ref{measurements}). After simple algebra one finds that 
\begin{equation}\label{Axl}
\mathscr{A}_x^k=\omega^{-(d-k)\gamma_m(x)}\sum_{n=0}^{k-1}\ket{d-k+n}\!\bra{n}+\omega^{k\,\gamma_m(x)}\sum_{n=k}^{d-1}\ket{n-k}\!\bra{n},
\end{equation}
\begin{equation}\label{Byl}
\mathscr{B}_y^{-k}=(\mathscr{B}_y^k)^{\dagger}=\omega^{(d-k)\zeta_m(y)}\sum_{n=0}^{k-1}\ket{d-k+n}\!\bra{n}+\omega^{k\,\zeta_m(y)}\sum_{n=k}^{d-1}\ket{n-k}\!\bra{n},
\end{equation}
and
\begin{equation}\label{Czl}
\mathscr{C}_z^k=\omega^{-(d-k)\theta_m(z)}\sum_{n=0}^{k-1}\ket{d-k+n}\!\bra{n}+\omega^{k\,\theta_m(z)}\sum_{n=k}^{d-1}\ket{n-k}\!\bra{n}.
\end{equation}
By plugging (\ref{Axl}), (\ref{Byl}) and (\ref{Czl}) into the conditions (\ref{system}), one obtains the following system of linear equations for the coefficients
$a_k$:
\begin{equation}\label{systemak}
\left\{
\begin{array}{l}
a_k\omega^{-k/2m}+a_k^*\omega^{k/2m}=1,\\[1ex]
a_k\omega^{(d-k)/2m}+a_k^*\omega^{-(d-k)/2m}=1,
\end{array}
\right.
\end{equation}
with $k=1,\ldots,\lfloor d/2\rfloor$.
This system can be directly solved, giving 
\begin{equation}\label{Solutionak}
a_k=\frac{\omega^{\frac{2k-d}{4m}}}{2\cos(\pi/2m)}\qquad (k=1,\ldots,\lfloor d/2\rfloor).
\end{equation}
Having explicit form of $a_k$, we can now excavate the coefficients $\alpha_n$ and $\beta_n$ from the system of equations (\ref{conditions}). The latter consists of $\lfloor d/2\rfloor$
equations involving $2\lfloor d/2\rfloor$ variables, meaning that it does not have a unique solution, and the solution will not be real in general. To avoid this problem we equip 
(\ref{conditions}) with $\lfloor d/2\rfloor$ additional equations of the form 
\begin{equation}\label{system2}
\sum_{n=0}^{\lfloor d/2\rfloor-1}\left(\alpha_n\omega^{kn}-\beta_n\omega^{-(n+1)k}\right)=a_k^*
\end{equation}
which will enforce $\alpha_n$ and $\beta_n$ to be real. 
Both sets of equations (\ref{conditions}) and (\ref{system2}) can be wrapped up 
into a single set of the form
\begin{equation}\label{system2}
\sum_{n=0}^{\lfloor d/2\rfloor-1}\left(\alpha_n\omega^{-kn}-\beta_n\omega^{k(n+1)}\right)=c_k,
\end{equation}
in which $c_k=a_k$ for $k=1,\ldots,\lfloor d/2\rfloor$ and $c_k=c_{-k}^*$ for
$k=-\lfloor d/2\rfloor,\ldots,-1$. In order to solve this system we consider 
the cases of odd and even $d$ separately. For odd $d$, (\ref{system2}) has the same
number of equations and variables, and thus we expect a unique solution, which 
after rather straightforward but tedious calculations is found to be
\begin{equation}\label{alphan2}
\alpha_n=\frac{1}{2d}\tan\left(\frac{\pi}{2m}\right)
\left[g_m(n)-g(\lfloor d/2\rfloor)\right]
\end{equation}
and
\begin{equation}\label{betan2}
\beta_n=\frac{1}{2d}\tan\left(\frac{\pi}{2m}\right)
\left[g_m(n+1-1/m)+g_m(\lfloor d/2\rfloor)\right]
\end{equation}
with $n=1,\ldots,\lfloor d/2\rfloor$ and $g_m(x) := \cot\{\pi[x+ 1/(2m)]/d\}$.

Then, in the case of even $d$ the equations for $k=d/2$ and $k=-d/2$
are the same, and therefore the system (\ref{system2}) consists of $d-1$
equations for $d$ variables. The additional variable we then fix in such a way that
the obtained $\alpha_n$ and $\beta_n$ assume the same form as for odd $d$.

\subsection{Generalization to an arbitrary number of parties}

It turns out that the above considerations remain valid 
if one considers an arbitrary number of parties $N$. 

Let us consider the same Bell expression as in 
(\ref{BellExp}), i.e.,
\begin{equation}\label{BellExpProN}
I_{N,m,d}=\sum_{n=0}^{\lfloor d/2 \rfloor-1 }(\alpha_n \mathbbm{P}_n-\beta_n\mathbbm{Q}_n),
\end{equation}
in which now $\mathbbm{P}_n$ and $\mathbbm{Q}_n$
are defined for any $N$ in (\ref{Pm}) and (\ref{Qm}). 
By using the discrete Fourier transform we can rewrite it 
in terms of the complex correlators as
\begin{equation}\label{BellExpNCC}
\widetilde{I}_{N,m,d}=\frac{1}{d}\sum_{\alpha_1,\ldots,\alpha_{N-1}=1}^{m}
\sum_{k=1}^{d-1}\left\langle 
\overline{\mathscr{A}}_{\alpha_1}^{(k)}
\prod_{i=2}^{N} \left(\mathscr{A}_{i,\alpha_{i-1} + \alpha_{i} - 1}\right)^{(-1)^{i-1} k}\right\rangle,
\end{equation}
where 
$\alpha_{N} = 1$. The variables $\overline{\mathscr{A}}_{\alpha_1}^{(k)}$ are, as before, combinations of $\mathscr{A}_{1,\alpha_1}^{k}$ and $\mathscr{A}_{1,\alpha_1+1}^{k}$
given by
\begin{equation}
\overline{\mathscr{A}}_{\alpha_1}^{(k)}=a_k\,\mathscr{A}_{1,\alpha_1}^k+a_k^*\,\mathscr{A}_{1,\alpha_1+1}^k
\end{equation}
for $\alpha_1=1,\ldots,m$, where, as before, $\mathscr{A}_{1,m+1}=\omega \mathscr{A}_{1,1}$. The coefficients $a_k$ are defined in Eq. (\ref{conditions}). Notice also that the Bell expression (\ref{BellExpNCC}) does not contain the term corresponding to $k=0$ as the latter is only a constant that can always be moved to the classical bound; for this reason we changed the notation from $I_{N,m,d}$ to $\widetilde{I}_{N,m,d}$. The values $I_{N,m,d}$ and $\widetilde{I}_{N,m,d}$ are related by the following formula
\begin{equation}
I_{N,m,d}=\widetilde{I}_{N,m,d}+\frac{m^{N-1}}{d}\sum_{n=0}^{\lfloor d/2\rfloor-1}\left(\alpha_n-\beta_n \right),
\end{equation}
and so for a particular choice of the coefficients $\alpha_n$ and $\beta_n$, given the value of our Bell expression in one representation, we can easily compute it in the other representation.

Now, the above form of the Bell expression suggests the conditions one needs to impose on the variables $\alpha_n$ and $\beta_n$ in order to obtain a Bell inequality maximally violated by the $N$-partite GHZ state $\ket{\mathrm{GHZ}_{N,d}}$. Namely, the following system of equations
\begin{equation}\label{warunek}
\overline{\mathscr{A}}_{\alpha_1}^{(k)}\otimes \bigotimes_{i=2}^{N} \left(\mathscr{A}_{i,\alpha_{i-1} + \alpha_{i} - 1}\right)^{(-1)^{i-1} k}\ket{\mathrm{GHZ}_{N,d}}=\ket{\mathrm{GHZ}_{N,d}},
\end{equation}
with the same conventions as above,
should hold for any sequence of $\alpha_i$'s and $k$ with the measurements being 
given in (\ref{measurements}). After some tedious calculations one finds that this 
leads to the same system of equations for $a_k$ as we obtained in the tripartite case (\ref{systemak}); its solution is given in (\ref{Solutionak}). Thus, our Bell inequalities for any number of parties are determined through the same coefficients $\alpha_n$ and $\beta_n$ as in the case $N=3$. 

To summarize, in the probability representation our class of Bell inequalities is given by (\ref{BellExpProN}) with $\alpha_n$ and $\beta_n$ stated explicitly in Eqs. (\ref{alphan2}) and (\ref{betan2}), respectively, while in the correlator representation it is given by (\ref{BellExpNCC}) with $a_k$ being of the form (\ref{Solutionak}). Moreover, for this choice of $\alpha_n$ and $\beta_n$, we have
\begin{equation}\label{S}
S:=\sum_{n=0}^{\lfloor d/2\rfloor-1}\left(\alpha_n-\beta_n\right)=\frac{1}{2}\left\{1-\tan\left(\frac{\pi}{2m}\right)\cot\left[\frac{\pi}{d}\left(\left\lfloor\frac{d}{2}\right\rfloor+\frac{1}{2m}\right)\right]\right\},
\end{equation}
and consequently the Bell expression in the probability and correlator forms are 
related through the following simple formula
\begin{equation}\label{mapping}
I_{N,m,d}=\widetilde{I}_{N,m,d}+(2m^{N-1}/d)S.
\end{equation}

To conclude this part, let us mention the Bell expressions 
obtained here belong to a more general family of Bell expressions
considered in Ref. \cite{Bancal}, some of which independently discovered in Ref. \cite{AGCA}.
However, in these works it is only shown that high-dimensional GHZ states can exhibit fully random and genuinely multipartite quantum correlations. Our method is such that the inequalities are built from the property that the multiqudit GHZ state and given measurements maximally violate it.

%

%

\iffalse
\textbf{Remik: Add here two things: (i) that this Bell inequality belongs to the class 
investigated in Ref. \cite{Bancal}, but our inequality wasn't found there and also that we use SOS to find the max. quantum violation whcih is important if one wants to use our inequality for self-testing, (ii) perhaps we should study the special cases $d=2$ and $m=2$; in the second case we recover the inequality from \cite{SLK06}.}
\fi

\section{Characterization}
\label{sec:characterization}

Here we characterize our class of Bell inequalities.
We first aim at computing their local bound. As this turns out to be a hard task, we provide the local bound only for the simplest scenarios; recall that in the bipartite case the classical value was computed analytically in Ref. \cite{our}. We then determine their maximal quantum value, showing at the same time that this value is attained by 
the state $\ket{\mathrm{GHZ}_{N,d}}$ and measurements (\ref{measurements0}) and (\ref{measurements}).
We finally obtain the maximal nonsignaling value of $I_{N,m,d}$.

\subsection{Classical and Svetlichny bounds of our inequalities}
Let us begin by noting that our Bell expression $I_{N,m,d}$ can be written in a simpler form as
\begin{equation}\label{NewHope}
I_{N,m,d}=\sum_{n=0}^{d-1}\hat{\alpha}_n\mathbb{P}_n,
\end{equation}
where $\mathbbm{P}_n$ is given in Eq. (\ref{Pm}) and
$\hat{\alpha}_n=\alpha_n$ for $n=0,\ldots,\lfloor d/2\rfloor-1$ and $\hat{\alpha}_n=-\beta_{d-1-n}$ for $n=\lfloor d/2\rfloor,\ldots,d-1$ (notice that in the odd $d$ case $\alpha_{\lfloor d/2\rfloor}=\beta_{\lfloor d/2\rfloor}=0$).

Let us also recall that to compute the maximal classical value $\beta_{\mathcal{L}}$ of our Bell expressions $I_{N,m,d}$ it is enough to maximize the latter over the vertices of $\mathcal{L}_{N,m,d}$, or, in other words, over all deterministic assignments $A_{i,x_i}\in[d]$ for $x_{i}=1,\ldots,m$ and $i=1,\ldots,N$, i.e., 
\begin{equation}\label{ClassicalValue}
\beta_{\mathcal{L}}=\max_{\{A_{i,k}\in [d]\}_{i \in [N], k \in [M]}} I_{N,m,d}, 
\end{equation}
for which $I_{N,m,d}$ given in (\ref{NewHope}) rewrites as  
\begin{equation}
 I_{N,m,d} = \sum_{m=0}^{d-1}\hat\alpha_m \sum_{\alpha_1, \ldots, \alpha_{N-1} = 1}^M \left[\delta(X_{\alpha_{1},\ldots,\alpha_{N-1}},m)+\delta(\overline{X}_{\alpha_{1},\ldots,\alpha_{N-1}},m)\right],
 \label{ExprExpanded}
\end{equation}
where $\delta(\cdot,\cdot)$ denotes the Kronecker delta, whereas the variables $X$ and $\overline{X}$ are defined in Eqs. (\ref{X}) and (\ref{X'}). 

To facilitate the computation of the classical value we want to express the maximum in (\ref{ClassicalValue}) in terms of the $X$ and $\overline{X}$ variables, instead of 
$A_{i,x_i}$. To do this, we need to remove all the linear dependencies between $X_{\alpha_1,\ldots,\alpha_{N-1}}$ and $\overline{X}_{\alpha_1,\ldots,\alpha_{N-1}}$. Let us illustrate what we mean by this with the bipartite case in which the classical value was computed analytically for any $m$ and $d$ in \cite{our}.

Let us then assume that $N=2$ and notice that 
the variables $X_{\alpha}$ and $\overline{X}_{\alpha}$
are related to $A_{\alpha}$ and $B_{\alpha}$ by the following formula
\begin{equation}
 \left(
 \begin{array}{c}
  1\\ \hline X_{1} \\ \vdots \\X_{m} \\ \hline \overline{X}_1\\ \vdots \\ \overline{X}_{m}
 \end{array}
 \right) = H 
 \left(
 \begin{array}{c}
  1\\ \hline A_1 \\ \vdots \\ A_{m}\\ \hline B_1\\ \vdots\\ B_{m}
 \end{array}
 \right),
 \label{eq:HSystem}
\end{equation}
where
\begin{equation}
H = \left(\begin{array}{c|c|c}
  1&0&0\\
  \hline
  0&\mathbbm{1}&-\mathbbm{1}\\
  \hline
  b&-\mathcal{X}&\mathbbm{1}
 \end{array}\right),
\end{equation}
$b = (0,0,\ldots, 0,-1)^T$ and $\mathcal{X} = \sum_{i=0}^{m-1} \ket{i}\bra{i+1}$. In order to find a linearly independent set of $X_{\alpha}$ and $\overline{X}_{\alpha}$, we want to find all linear combinations such that
\begin{equation}
 a + \sum_{\alpha}\left[ g_{\alpha} X_{\alpha} + h_{\alpha} \overline{X}_{\alpha} \right]= 0,
\end{equation}
where $a$, $g_{\alpha}$ and $h_{\alpha}$ are some coefficients to be determined. For this purpose, we want to determine the kernel of $H^T$, which in this case consists of a single vector $(1,1,\ldots,1)^T$. Consequently, we arrive at the condition for the bipartite Bell expression $I_{2,m,d}$ constructed in Ref. \cite{our}, which is
\begin{equation}
 \sum_{\alpha=0}^{m-1} \left[X_{\alpha} + \overline{X}_{\alpha}\right] \equiv -1 \mod d.
 \label{ConditionX}
\end{equation}
As proven in Ref. \cite{our}, this allows to nest the optimization of the classical bound and use a dynamic programming procedure to find it efficiently.
Indeed, one can re-express the optimization in (\ref{ClassicalValue}) in terms of the $X_{\alpha}, \overline{X}_{\alpha}$, but with the condition stemming from (\ref{ConditionX}).

This allows us to write
\begin{equation}
\beta_{\mathcal{L}}=\max_{\{X_{\alpha}, \overline{X}_{\alpha}\in [d]\ : \sum_\alpha X_\alpha + \overline{X}_\alpha \equiv -1 \mod d\}} I_{N,m,d},
\label{eq:SmartMaximization}
\end{equation}
and eliminate the variables in the optimization successively thanks to the form of (\ref{ExprExpanded}). The corresponding classical bound of $I_{2,m,d}$ is then found to be \cite{our}:
\begin{equation}\label{Classical2}
\beta_{\mathcal{L}}^{2,m,d} = \frac{1}{2d} \tan\left(\frac{\pi}{2m}\right) \left[ (2m - 1) g(0) - 
g\left(1-\frac{1}{m}\right) - 2m g\left( \left\lfloor \frac{d}{2} \right\rfloor \right) \right].
\end{equation}

In what follows we attempt to formalize the procedure to find the classical bound of the inequality for a larger system size. Here we outline the steps, but several obstacles arise in the multipartite case that we currently do not see how to overcome.

Let us consider an arbitrary number of observers $N$. Likewise, we wish to find all $a, g_{\boldsymbol{\alpha}}, h_{\boldsymbol{\alpha}} \in \mathbbm{Z}_d$ such that
\begin{equation}\label{dupa}
 a + \sum_{\boldsymbol{\alpha}} \left[g_{\boldsymbol{\alpha}} X_{\boldsymbol{\alpha}} + h_{\boldsymbol{\alpha}}\overline{X}_{\boldsymbol{\alpha}}\right] \equiv 0 \mod d.
\end{equation}
Note however, that now $\boldsymbol{\alpha}$ is a vector and therefore one expects the solution set to be multidimensional. In other words, $H$ is a square matrix only for $N=2$. If $N>2$ we shall have on the left hand side of (\ref{eq:HSystem}) a $(1+2m^{N-1})$-component vector of $X_{\boldsymbol{\alpha}}$'s and $\overline{X}_{\boldsymbol \alpha}$'s, whereas on the left hand side we shall have a ($1+mN$)-component vector of observables. Therefore, the number of terms in the kernel of $H^T$ will grow exponentially with $N$, implying there will be an exponential number of non-trivial constraints in the maximization analogous to (\ref{eq:SmartMaximization}).
In order to find them, using Eqs. (\ref{X}) and (\ref{X'}), the above equation can be expanded as
\begin{equation}
 a + \sum_{\boldsymbol{\alpha}}\left[g_{\boldsymbol{\alpha}}A_{1,\alpha_1}-h_{\boldsymbol{\alpha}}A_{1,\alpha_1+1}\right]+\sum_{i=2}^N(-1)^{i-1}\sum_{\boldsymbol{\alpha}}\left[g_{\boldsymbol{\alpha}} - h_{\boldsymbol{\alpha}} \right] A_{i,\alpha_{i-1} + \alpha_i - 1} \equiv 0 \mod d.
\end{equation}
This gives a set of equations that make the coefficients in front of $A_{N,\alpha_{N-1}}$ congruent to $0\,\, \mathrm{mod}\,\, d$:
\begin{equation}
 \sum_{\boldsymbol{\alpha}:{\alpha_{N-1}=k}}\left[g_{\boldsymbol{\alpha}} - h_{\boldsymbol{\alpha}}\right] \equiv 0 \mod d \qquad (1 \leq k \leq m). 
\end{equation}
Similarly, for $1 < j < N$, we make the coefficient in front of $A_{j,k}$ congruent to $0\,\, \mathrm{mod}\,\, d$:
\begin{equation}
 \sum_{\boldsymbol{\alpha}:{\alpha_{j-1}+\alpha_{j}-1 = k}} \left[g_{\boldsymbol{\alpha}} - h_{\boldsymbol{\alpha}}\right] \equiv 0 \mod d \qquad (1 \leq k \leq m).
\end{equation}
Then, the coefficient that multiplies $A_{1,\alpha_1}$ is
\begin{equation}
 \sum_{\boldsymbol{\alpha}:{\alpha_1 = k}} \left[g_{\alpha_1=k,\boldsymbol{\alpha'}} - h_{\alpha_1=k-1,\boldsymbol{\alpha'}}\right] \equiv 0 \mod d \qquad (1 \leq k \leq m),
\end{equation}
where $\boldsymbol{\alpha}'\equiv\alpha_2,\ldots,\alpha_{N-1}$. We have also an equation for the constant term in Eq. (\ref{dupa}). Here we have to take into consideration that $A_{i,m-1 + k} = A_{i,k} + 1$ for $k>0$ and any $i=1,\ldots,N$. 

In order to appropriately do the substitutions of the $X_{\boldsymbol{\alpha}}$'s and $\overline{X}_{\boldsymbol \alpha}$'s, one needs to perform Gauss elimination on a basis of this kernel. However, we note that the equations (\ref{dupa}) are over $\mathbbm{Z}_d$, which is a field only of $d$ is a prime number. Therefore, for some values of $d$, inverses may not exist, and it can be much more complicated to obtain a good basis of $\ker{H}^T$. This was not a problem for $N=2$ as $\ker H^T$ was generated simply by a vector of ones. In addition, it is unclear how to later exploit the properties of the $g$ function that were used in \cite{our} in order to find an analytical form of $\beta{\mathcal L}^{N,m,d}$ for $N>2$.

Thus, even though the procedure above helps to slightly reduce the complexity of the optimization in some particular cases with a few number of particles, we resort to numerics in the general case.
As pointed out in Theorem \ref{class_thm}, we have that $\beta_{\mathcal{L}}^{3,2,d} = \beta_{\mathcal{S}}^{3,2,d}$ and its value is thus given by expression (\ref{Classical2}). For completeness, we include the classical bound values in the simplest Bell scenarios
in tables \ref{TabelkaI} and \ref{TabII}.

\begin{table}[h]
\label{TabelkaI}
\begin{center}
\renewcommand{\arraystretch}{1.5}
\begin{tabular}{|c|c|c|c|}
\cline{1-4}
\multicolumn{4}{|c|}{$N=3$, $M=2$}\\
\hline
$d$ & $2$ & $3$ & $4$ \\ \hline
$\beta_{\mathcal{L}}^{3,2,d} = 
\beta_{\mathcal{S}}^{3,2,d}$ & 4.2426 & 3.0416 & 3.5953 \\
 \hline
\multicolumn{4}{|c|}{$N=3$, $M=3$}\\
\hline
 \multicolumn{1}{|c|}{$\beta_{\mathcal{L}}^{3,3,d}$} & $\frac{13}{\sqrt{3}}$ & $\tfrac{1}{6\sqrt{3}}[13 \cot\left(\tfrac{\pi}{18}\right) - 17\tan\left(\tfrac{\pi}{9}\right) - 4\tan\left(\tfrac{2\pi}{9}\right)]$ & $\frac{-10 + 17\sqrt{2} + 14\sqrt{6}}{4\sqrt{3}}$ \\ 
& = 7.5056 & = 6.1760 & = 6.9765 \\\hline
\multicolumn{1}{|c|}{$\beta_{\mathcal{S}}^{3,3,d}$} & 8.6603 & 7.3132 & 8.1115 \\ \hline
\end{tabular}
\renewcommand{\arraystretch}{1}
\end{center}
\caption{Maximal classical and Svetlichny values of $I_{N,m,d}$ for $N=3$, $M=2,3$ and $d=2,3,4$.}
\end{table}

\begin{table}
\label{TabII}
\begin{center}
\renewcommand{\arraystretch}{1.5}
\begin{tabular}{|c|c|c|c|}
\cline{1-4}
\multicolumn{4}{|c|}{$N=4$, $M=2$}\\\hline
$d$ & $d=2$ & $d=3$ & $d=4$ \\ \hline
\multicolumn{1}{|c|}{$\beta_{\mathcal{L}}^{4,2,d}$} & $\frac{5}{2}[\cot\left(\tfrac{\pi}{8}\right) + \tan\left(\tfrac{\pi}{8}\right)]$ & $\frac{10}{\sqrt{3}} + \frac{5}{6}(-3 + \sqrt{3})$ & $\frac{1}{8}[10 \cot\left(\frac{\pi}{16}\right) - 5\cot\left(\frac{3\pi}{16}\right) $ \\ 
\multicolumn{1}{|c|}{} &  & & $+ 16\tan\left(\frac{\pi}{16}\right) + \tan\left(\frac{3\pi}{16}\right)]$ \\
\multicolumn{1}{|c|}{} & $= 7.0711$ & $= 4.7169$ & $= 5.8301$\\ \hline
\multicolumn{1}{|c|}{$\beta_{\mathcal{S}}^{4,2,d}$} & 8.4853 & 6.0829 & 7.1905 \\ \hline
\multicolumn{4}{|c|}{$N=4$, $M=3$}\\\hline
\multicolumn{1}{|c|}{$\beta_{\mathcal{L}}^{4,3,d}$} & $\frac{35}{\sqrt{3}}$ & $\tfrac{7}{6\sqrt{3}}[5 \cot\left(\frac{\pi}{18}\right) - 7\tan\left(\frac{\pi}{9}\right) - 2\tan\left(\frac{2\pi}{9}\right)]$ &  \\ 
& = 20.2073 & = 16.2537 & \\\hline
\multicolumn{1}{|c|}{$\beta_{\mathcal{S}}^{4,3,d}$} & 25.9808 & 21.9394 & 24.3345\\ \hline
\end{tabular}
\renewcommand{\arraystretch}{1}
\end{center}
\caption{Maximal classical and Svetlichny values of $I_{N,m,d}$ for $N=4$, $M=2,3$ and $d=2,3,4$.}
\end{table}

On the other hand, it is quite direct to obtain an upper bound on the 
maximal value of $I_{N,m,d}$ over the Svetlichny correlations. Precisely, 
the results of Ref. \cite{Bancal} allow us to state the following theorem.

\begin{thm}\label{class_thm}The Svetlichny bound of $I_{N,m,d}$ is bounded from above as 
$\beta_{\mathcal{S}}^{N,m,d} \leq  m^{N-2} \beta_{\mathcal{L}}^{2,m,d}$, where $\beta_{\mathcal{L}}^{2,m,d}$ is the classical bound of the bipartite Bell inequality given explicitly in (\ref{Classical2}).
\end{thm}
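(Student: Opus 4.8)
The plan is to reduce $\beta_{\mathcal{S}}^{N,m,d}$ to a bipartite classical maximization and then to count how many copies of it appear. Because $\mathcal{S}_{N,m,d}$ is convex and $I_{N,m,d}$ is linear, $\beta_{\mathcal{S}}^{N,m,d}$ is attained at an extreme point of $\mathcal{S}_{N,m,d}$, i.e.\ for some bipartition $G|\bar{G}$ at a product of two deterministic strategies, one for $G$ and one for $\bar{G}$. Within a group arbitrary, hence at an extreme point deterministic, signaling behaviour is allowed, so a party's outcome may depend on the settings of the other parties of its group. It thus suffices to upper bound $I_{N,m,d}$ over all bipartitions $G|\bar{G}$ and all such deterministic product assignments.

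I would carry this out in the correlator representation (\ref{BellExpNCC}), which is the natural one here because a distribution that is a product across $G|\bar{G}$ makes every generalized correlator factorize: for a fixed value of the shared randomness the term labelled by $(\boldsymbol{\alpha},k)$ splits into a correlator built solely from the operators of the parties in $G$ times one built solely from those in $\bar{G}$, each of modulus at most one. The chain structure of the settings is the key point: since party $j$ is measured with setting $\alpha_{j-1}+\alpha_j-1$, only the indices sitting at the boundary between $G$ and $\bar{G}$ are shared by the two groups, whereas the remaining indices are internal to a single group. For a contiguous cut there is a single boundary index; summing the $N-2$ internal indices then produces the multiplicative factor $m^{N-2}$, while the boundary index organizes the surviving sum into exactly the bipartite functional $\widetilde{I}_{2,m,d}$ with its $m$ settings per party. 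After restoring the additive constant of (\ref{mapping}), each copy becomes the probability-form functional $I_{2,m,d}$. This is precisely the mechanism behind the family of Ref.\ \cite{Bancal}, to which $I_{N,m,d}$ belongs, and which yields $\beta_{\mathcal{S}}^{\mathrm{ex}}=m^{N-2}(m-1)$ for the unweighted expression $I_{\mathrm{ex}}$ \cite{Aolita,Bancal}.

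Since the coefficients $\hat{\alpha}_n$ carried into the effective bipartite functional are exactly those of $I_{2,m,d}$, each of the $m^{N-2}$ copies is bounded by the bipartite classical value computed in Ref.\ \cite{our}, namely (\ref{Classical2}); adding them yields $\beta_{\mathcal{S}}^{N,m,d}\leq m^{N-2}\beta_{\mathcal{L}}^{2,m,d}$. The main obstacle is to make this factorization-and-counting step uniform over \emph{all} bipartitions: a non-contiguous cut shares several indices, so the effective bipartite scenario acquires more than $m$ settings per side and does not literally reduce to $I_{2,m,d}$. I would dispose of this by showing that any additional shared index can be frozen to its optimal value, reducing every bipartition to the single-boundary case, and by checking that no such configuration exceeds the contiguous-cut bound, so that (\ref{Classical2}) remains the correct per-copy maximum. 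This uniformity over bipartitions, together with the precise identification of the seed functional and its coefficients, is exactly where the results of Ref.\ \cite{Bancal} do the real work.
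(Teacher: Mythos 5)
Your overall strategy --- reduce to extreme points of $\mathcal{S}_{N,m,d}$, use the chain structure of the settings to split the sum into $m^{N-2}$ effective bipartite functionals, and bound each copy by $\beta_{\mathcal{L}}^{2,m,d}$ --- is the right one in spirit, and it is worth noting that the paper itself offers no more than a citation to Ref.~\cite{Bancal} for this theorem. However, as written your argument has two genuine gaps, and you candidly locate both of them in exactly the place where you then defer back to Ref.~\cite{Bancal}, so the proposal does not yet constitute a proof. First, the identification of each copy with $I_{2,m,d}$ is asserted rather than established, and the naive identification fails: for a contiguous cut $G=\{A_1,\dots,A_p\}$ the single shared index is $\alpha_p$, but the $\overline{X}$ terms differ from the $X$ terms by shifting $\alpha_1\mapsto\alpha_1+1$ (party $1$'s setting), not by shifting the boundary index $\alpha_p$. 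Hence, with the internal indices fixed and $\alpha_p$ playing the role of the bipartite setting label, the effective functional has the form $\sum_{\alpha_p}[P(Y_{\alpha_p}-Z_{\alpha_p}=n)+P(Z_{\alpha_p}-Y'_{\alpha_p}=n)]$ with $Y'_{\alpha_p}\neq Y_{\alpha_p+1}$ in general, so it is not literally $I_{2,m,d}$ and the bound (\ref{Classical2}) does not apply to it without further argument (a relabelling of the effective observables is needed, and this is precisely the nontrivial content of the construction in Ref.~\cite{Bancal}).

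Second, your treatment of non-contiguous bipartitions by ``freezing'' the extra shared indices is not justified. Those indices are summed over in $I_{N,m,d}$, so they cannot simply be fixed to an optimal value without changing the quantity being bounded; and replacing a sum $\sum_{\alpha_j}$ by $m\max_{\alpha_j}$ destroys the per-copy identification with $I_{2,m,d}$ that your counting relies on. Since the uniformity over all bipartitions is exactly what the theorem claims, this step cannot be waved through. In short, your proposal is a reasonable outline of the mechanism behind the result of Ref.~\cite{Bancal}, but the two key identifications (the per-copy functional equals $I_{2,m,d}$, and every bipartition reduces to the single-boundary case) are the theorem's actual content and remain unproven in your write-up; to complete the argument you would either have to carry out the explicit decomposition of $I_{N,m,d}$ into $m^{N-2}$ bipartite copies for an arbitrary bipartition, or do what the paper does and invoke Ref.~\cite{Bancal} outright.
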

\begin{proof}The proof is given in Ref. \cite{Bancal}.
\end{proof}
It is worth mentioning that for the case $N=3$ and $m=2$ the bound $\beta_{\mathcal{S}}^{3,2,d}$ is also saturated by fully product probability distribution (see also Table \ref{TabelkaI})
\begin{equation}
p_{\mathrm{loc}}(\boldsymbol{a}|\boldsymbol{x})=p_{A}(a|x)p_{B}(b|y)p_{C}(c|z)
\end{equation}
such that $p_{A}(0|x)=p_{A}(0|y)=p_{C}(0|z)=1$ for all $x,y,z$. So, $\beta_{\mathcal{S}}^{3,2,d}$ is also the classical bound of the corresponding Bell inequality. In general, however, the classical and Svetlichny bounds differ.

\subsection{Quantum and nonsignaling bounds}

Let us now move on to the quantum and nonsignaling bounds.

\begin{thm}\label{quantum_thm} The maximal quantum value of $\widetilde{I}_{N,m,d}$ is $\widetilde{\beta}_Q^{N,m,d}=m^{N-1}(d-1)/d$.
\end{thm}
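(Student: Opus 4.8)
The plan is to prove the equality by sandwiching $\widetilde\beta_Q^{N,m,d}$ between $m^{N-1}(d-1)/d$ from below (attainability by the GHZ state) and from above (optimality over all quantum correlations), the latter via a sum-of-squares (SOS) decomposition of the Bell operator.

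For attainability I would work directly with the operator
\begin{equation}
\widehat{\mathcal{B}}=\frac{1}{d}\sum_{\alpha_1,\ldots,\alpha_{N-1}=1}^{m}\sum_{k=1}^{d-1}\overline{\mathscr{A}}_{\alpha_1}^{(k)}\otimes\prod_{i=2}^{N}\left(\mathscr{A}_{i,\alpha_{i-1}+\alpha_i-1}\right)^{(-1)^{i-1}k},
\end{equation}
obtained from $\widetilde{I}_{N,m,d}$ in (\ref{BellExpNCC}) by promoting the generalized correlators to operators (with $\alpha_N=1$). Evaluating $\widehat{\mathcal{B}}$ on $\ket{\mathrm{GHZ}_{N,d}}$ with the measurements (\ref{measurements0})--(\ref{measurements}), the defining condition (\ref{warunek}) says that each of the $m^{N-1}(d-1)$ summands maps the GHZ state to itself, so every correlator equals one and $\langle\mathrm{GHZ}_{N,d}|\widehat{\mathcal{B}}|\mathrm{GHZ}_{N,d}\rangle=m^{N-1}(d-1)/d$. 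This already shows $\widetilde\beta_Q^{N,m,d}\geq m^{N-1}(d-1)/d$.

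For the matching upper bound I would exhibit real nonnegative weights $\mu_j$ and operators $P_j$ annihilating the GHZ state such that
\begin{equation}
\frac{m^{N-1}(d-1)}{d}\,\mathbbm{1}-\widehat{\mathcal{B}}=\sum_j\mu_j\,P_j^{\dagger}P_j,
\end{equation}
which is manifestly positive semidefinite and hence forces $\langle\widehat{\mathcal{B}}\rangle\leq m^{N-1}(d-1)/d$ on every state, for projective (thus unitary, $\mathscr{A}_{i,x}^{d}=\mathbbm{1}$) measurements; a standard Naimark dilation then extends the bound to arbitrary POVMs. The natural ingredients are the stabiliser monomials $U_{\boldsymbol\alpha}^{(k)}=\mathscr{A}_{1,\alpha_1}^{k}\otimes\prod_{i\geq2}(\cdots)^{(-1)^{i-1}k}$ and its partner $\widetilde{U}_{\boldsymbol\alpha}^{(k)}$ built from $\mathscr{A}_{1,\alpha_1+1}$, for which the two relations in (\ref{systemak}) encode $U_{\boldsymbol\alpha}^{(k)}\ket{\mathrm{GHZ}_{N,d}}=\omega^{-k/2m}\ket{\mathrm{GHZ}_{N,d}}$ and $\widetilde{U}_{\boldsymbol\alpha}^{(k)}\ket{\mathrm{GHZ}_{N,d}}=\omega^{k/2m}\ket{\mathrm{GHZ}_{N,d}}$. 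Choosing each $P_j$ as a suitable phase-weighted combination of these monomials and of the unitary tail operators they are compared against, the unitarity of every factor collapses the diagonal contributions $U^{\dagger}U=\widetilde{U}^{\dagger}\widetilde{U}=\mathbbm{1}$ to multiples of the identity, the coefficient symmetry $a_{d-k}=a_k^{*}$ from (\ref{Solutionak}) makes the whole expression Hermitian, and the linear terms are arranged to reproduce $-\widehat{\mathcal{B}}$ exactly.

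The hard part is to guarantee that all residual monomials—those which are neither the identity nor part of $\widehat{\mathcal{B}}$—cancel, leaving precisely the constant $m^{N-1}(d-1)/d$. These residuals are products of neighbouring first-party operators $\mathscr{A}_{1,\alpha_1}^{-k}\mathscr{A}_{1,\alpha_1+1}^{k}$ (the tails cancelling) together with their cross-party analogues, and a naive per-$\boldsymbol\alpha$ choice of squares fails to remove them: it yields only the loose value $m^{N-1}(d-1)/(d\cos^{2}(\pi/2m))$. Their cancellation has to be engineered globally, exploiting the cyclic convention $\mathscr{A}_{1,m+1}=\omega\,\mathscr{A}_{1,1}$ and the exact phases in (\ref{Solutionak}), exactly as in the bipartite case of Ref. \cite{our}. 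I expect the multipartite problem to reduce to that bipartite computation, because each tail $\prod_{i\geq2}(\cdots)^{(-1)^{i-1}k}$ is unitary and the first-party residuals are independent of $\alpha_2,\ldots,\alpha_{N-1}$, so summing over those free indices merely produces the factor $m^{N-2}$ and the whole algebra collapses onto $N=2$. The final step is a trigonometric identity—of the same flavour as (\ref{S}) and using $|a_k|^{2}=1/(4\cos^{2}(\pi/2m))$—certifying that the leftover multiple of the identity equals $m^{N-1}(d-1)/d$; verifying this identity together with the global cancellation is where essentially all the effort concentrates.
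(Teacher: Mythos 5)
Your proposal is correct and follows essentially the same route as the paper: attainability via the stabiliser condition (\ref{warunek}) giving value $m^{N-1}(d-1)/d$ on $\ket{\mathrm{GHZ}_{N,d}}$, plus an SOS decomposition of $\beta_{\mathcal{Q}}^{N,m,d}\mathbbm{1}-\mathcal{B}_{N,m,d}$ built from the terms $\mathbbm{1}-\overline{\mathscr{A}}_{\alpha_1}^{(k)}\otimes(\cdots)$, with the residual first-party cross-terms for $m>2$ cancelled by additional first-party-only squares carrying exactly the prefactor $m^{N-2}$ you predict (the paper's operators $T_{\alpha}^{(k)}$ with coefficients (\ref{coeffabc})--(\ref{coeffm2}), lifted from the bipartite case of Ref.~\cite{our}). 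The only content you defer---the explicit form of these extra squares and the verification of the global cancellation---is precisely what the paper writes out, so the plan matches.
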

\begin{proof}To prove that $\beta_{Q}^{N,m,d}$ is an upper bound on the maximal quantum value of $I_{N,m,d}$ we can follow the method of Ref. \cite{Bancal}. Here, however, we follow
an alternative approach exploiting the sum-of-squares decompositon of the shifted Bell operator, which might be of use for such applications of nonlocality as self-testing.

To this end, let us consider a Bell operator $\mathcal{B}_{N,m,d}$ constructed from
the Bell expression $\widetilde{I}_{N,d,m}$ with some observables
$\mathscr{A}_{i,x_i}$ (that is, unitary operators such that $\mathscr{A}_{i,x_i}^d=\mathbbm{1}$):
\begin{equation}\label{BellExpNCC}
\mathcal{B}_{N,m,d}:=\frac{1}{d}\sum_{\alpha_1,\ldots,\alpha_{N-1}=1}^M
\sum_{k=1}^{d-1}
\overline{\mathscr{A}}_{\alpha_1}^{(k)}\otimes
\bigotimes_{i=2}^{N} \left(\mathscr{A}_{i,\alpha_{i-1} + \alpha_{i} - 1}\right)^{(-1)^{i-1} k},
\end{equation}
Our aim now is to prove that the operator $\beta_{\mathcal{Q}}^{N,m,d}\mathbbm{1}-\mathcal{B}_{N,m,d}$
is positive semi-definite for arbitrary observables $\mathscr{A}_{i,x_i}$ with the identity operator $\mathbbm{1}$ being of defined on the corresponding Hilbert space. This can be achieved by decomposing $\beta_{\mathcal{Q}}^{N,m,d}\mathbbm{1}-\mathcal{B}_{N,m,d}$ into a sum of squares. To be more precise, let us first consider the simpler case of $m=2$ and introduce the following operators
\begin{equation}
P^{(k)}_{\alpha_1,\ldots,\alpha_{N-1}}=\mathbbm{1}- \overline{\mathscr{A}}_{\alpha_1}^{(k)}\otimes \bigotimes_{i=2}^{N}\left(\mathscr{A}_{i,\alpha_{i-1} + \alpha_{i} - 1}\right)^{(-1)^{i-1} k}.
\end{equation}
Then, by a direct check one finds that the following decomposition
\begin{equation}
\beta_{\mathcal{Q}}^{N,2,d}\mathbbm{1}-\mathcal{B}_{N,2,d}=\frac{1}{2d}\sum_{\alpha_1,\ldots,\alpha_{N-1}=1}^{m}
\sum_{k=1}^{d-1}\left[P^{(k)}_{\alpha_1,\ldots,\alpha_{N-1}}\right]^{\dagger}P^{(k)}_{\alpha_1,\ldots,\alpha_{N-1}}
\end{equation}
holds true.
In the case of arbitrary number of measurements, the above sum of squares needs to be 
slightly modified. Let us introduce the following operators
\begin{equation}
T_{\alpha}^{(k)}=\mu^*_{\alpha,k}\,\mathscr{A}_{1,2}^k+\nu^*_{\alpha,k}\,\mathscr{A}_{1,\alpha+2}^k+\tau_{\alpha,k}\,\mathscr{A}_{1,\alpha+3}^k
\end{equation}
for $\alpha=1,\ldots,m-2$ and $k=1,\ldots,d-1$, where the coefficients 
are defined as
\begin{eqnarray}
\mu_{\alpha,k} & = & \frac{\omega^{(\alpha + 1)(d - 2k)/2m}}{2 \cos (\pi/2m)} \frac{\sin (\pi/m)}{\sqrt{\sin (\pi \alpha/m) \sin \left[\pi (\alpha +1)/m\right]}}, \nonumber \\
\nu_{\alpha,k} & = & -\frac{\omega^{(d - 2k)/2m}}{2 \cos (\pi/2m)} \sqrt{\frac{\sin \left[\pi (\alpha +1)/m\right]}{\sin (\pi \alpha/m)}}, \nonumber \\
\tau_{\alpha,k} & = & \frac{1}{2 \cos (\pi/2m)} \sqrt{\frac{\sin (\pi \alpha/m)}{ \sin \left[\pi (\alpha +1)/m\right]}}=-\frac{\omega^{(d - 2k)/2m}}{4\cos^2(\pi/2m)}\nu_{\alpha,k}^{-1},
\label{coeffabc}
\end{eqnarray}
for $i = 1, \ldots,m-3$ and $k = 1, \ldots, d-1$, while for $i=m-2$ and $k=1,\ldots,d-1$ they are given by 
\begin{eqnarray}
\mu_{m-2,k} & = & - \frac{\omega^{-(d - 2k)/2m}}{2\sqrt{2} \cos (\pi/2m)\sqrt{\cos (\pi/m)}},\nonumber \\
\nu_{m-2,k} & = & - \frac{\omega^k \omega^{(d - 2k)/2m}}{2\sqrt{2} \cos (\pi/2m)\sqrt{\cos (\pi/m)}},\nonumber \\
\tau_{m-2,k} & = & \frac{\sqrt{\cos (\pi/m)}}{\sqrt{2}\cos (\pi/2m)}.
\label{coeffm2}
\end{eqnarray}
Then, the sum of squares is given by
\begin{equation}
\beta_{\mathcal{Q}}^{N,m,d}\mathbbm{1}-\mathcal{B}_{N,m,d}=\frac{1}{2d}\sum_{\alpha_1,\ldots,\alpha_{N-1}=1}^{m}
\sum_{k=1}^{d-1}\left[P^{(k)}_{\alpha_1,\ldots,\alpha_{N-1}}\right]^{\dagger}P^{(k)}_{\alpha_1,\ldots,\alpha_{N-1}}+\frac{m^{N-2}}{2d}\sum_{\alpha=1}^{m-2}\sum_{k=1}^{d-1}
\left[T_{\alpha}^{(k)}\right]^{\dag}T_{\alpha}^{(k)}.
\end{equation}
To conclude the proof, let us notice that for the state $\ket{\mathrm{GHZ}_{N,d}}$ and the measurements (\ref{measurements0}) and (\ref{measurements}) the value of $\widetilde{I}_{N,m,d}$ is clearly $m^{N-1}(d-1)/d$, which follows from the fact that for this realisation
each correlator in $\widetilde{I}_{N,m,d}$ assumes value one (cf. Eq. \ref{warunek}).
$\beta_{\mathcal{Q}}^{N,m,d}$ is thus the maximal quantum value of $\widetilde{I}_{N,m,d}$.
\end{proof}

\begin{thm}\label{ns_thm}The maximal nonsignaling value of $I_{N,m,d}$ equals its algebraic bound and it is given by $\beta_{\mathcal{N}}^{N,m,d}=  2m^{N-1}\alpha_0$.
\end{thm}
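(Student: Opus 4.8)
Since the nonsignaling polytope is contained in the set of all (normalized but otherwise unconstrained) behaviors, one always has $\beta_{\mathcal{N}}^{N,m,d}\le\beta_{\mathrm{alg}}^{N,m,d}$, where $\beta_{\mathrm{alg}}$ is the algebraic (unconstrained) maximum of $I_{N,m,d}$. The plan is therefore to (i) compute $\beta_{\mathrm{alg}}^{N,m,d}$ in closed form and (ii) exhibit a single nonsignaling box attaining it; equality of the two bounds then follows and yields the stated value. For step (i) I would start from the single-sum form (\ref{NewHope}) and its deterministic expansion (\ref{ExprExpanded}), and pass to general behaviors by replacing the Kronecker deltas with probabilities, writing
\begin{equation}
I_{N,m,d}=\sum_{\alpha_1,\ldots,\alpha_{N-1}=1}^{m}\sum_{n=0}^{d-1}\hat{\alpha}_n\left[P(X_{\alpha_1,\ldots,\alpha_{N-1}}=n)+P(\overline{X}_{\alpha_1,\ldots,\alpha_{N-1}}=n)\right].
\end{equation}
For each fixed tuple $\alpha_1,\ldots,\alpha_{N-1}$ the numbers $\{P(X_{\alpha_1,\ldots,\alpha_{N-1}}=n)\}_n$ form a probability distribution, so $\sum_n\hat{\alpha}_nP(X_{\alpha_1,\ldots,\alpha_{N-1}}=n)\le\max_n\hat{\alpha}_n$, and likewise for the $\overline{X}$ term. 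As there are $2m^{N-1}$ such blocks, this immediately gives $\beta_{\mathrm{alg}}^{N,m,d}=2m^{N-1}\max_n\hat{\alpha}_n$, and the whole problem reduces to identifying the largest coefficient $\hat{\alpha}_n$.

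\textbf{Identifying the largest coefficient.} Here I would use the explicit forms (\ref{alphan2}) and (\ref{betan2}) together with $g_m(x)=\cot\{\pi[x+1/(2m)]/d\}$. Recalling that $\hat{\alpha}_n=\alpha_n$ for $n=0,\ldots,\lfloor d/2\rfloor-1$ and $\hat{\alpha}_n=-\beta_{d-1-n}$ otherwise, the claim $\max_n\hat{\alpha}_n=\alpha_0$ splits into two monotonicity statements. Since cotangent is strictly decreasing on $(0,\pi)$, the function $g_m$ is decreasing in its argument, so $g_m(n)\le g_m(0)$ forces $\alpha_n\le\alpha_0$ for all $n$ in the first range; the same monotonicity shows $\alpha_0>0$ because $g_m(0)=\cot[\pi/(2md)]$ dominates $g_m(\lfloor d/2\rfloor)$. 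For the second range it suffices to check that $\beta_{d-1-n}\ge0$, i.e. that $-\beta_{d-1-n}\le0<\alpha_0$, which again follows from the sign of $g_m$ on the relevant arguments (with the degenerate case handled by $\alpha_{\lfloor d/2\rfloor}=\beta_{\lfloor d/2\rfloor}=0$ for odd $d$). Combining the two gives $\max_n\hat{\alpha}_n=\alpha_0$ and hence $\beta_{\mathrm{alg}}^{N,m,d}=2m^{N-1}\alpha_0$.

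\textbf{Achievability by a nonsignaling box.} To saturate this bound it suffices to construct one nonsignaling behavior for which every $X_{\alpha_1,\ldots,\alpha_{N-1}}$ and every $\overline{X}_{\alpha_1,\ldots,\alpha_{N-1}}$ equals $0$ with certainty, so that $\mathbb{P}_0=2m^{N-1}$ and all other $\mathbb{P}_n$ vanish. I would use a generalized Popescu-Rohrlich box of the form
\begin{equation}
p(\mathbf{a}|\mathbf{x})=\frac{1}{d^{N-1}}\,\mathbbm{1}\!\left[\sum_{j=1}^{N}(-1)^{j-1}a_j\equiv f(\mathbf{x})\bmod d\right],
\end{equation}
where $f$ is a suitable $\mathbb{Z}_d$-valued function of the settings. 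Because every coefficient $(-1)^{j-1}$ is invertible modulo $d$, summing out any single outcome leaves a uniform marginal independent of the remaining settings; hence every proper marginal is uniform and the no-signaling constraints (\ref{eq:ns}) hold automatically, irrespective of $f$. The value of $f$ at each physical setting would be fixed so that, once the cyclic relabeling $A_{j,m+\gamma}=A_{j,\gamma}+1$ of (\ref{X})--(\ref{X'}) is taken into account, the residual shift is cancelled and the two linear forms both concentrate on $0$; one can then cross-check the resulting value through the correlator relation (\ref{mapping})--(\ref{S}), in which each correlator of $\widetilde{I}_{N,m,d}$ takes the value $1$. \textbf{The main obstacle} is precisely the well-definedness of $f$: distinct index tuples $\alpha_1,\ldots,\alpha_{N-1}$, and in particular an $X$-setting for one tuple coinciding with an $\overline{X}$-setting for another, may map to the same physical setting $\mathbf{x}\in\{1,\ldots,m\}^N$ while the wrap-around shifts demand different residues. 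Showing that these demands are always mutually compatible---so that a single consistent $f$ exists---is the delicate step, and I would attack it by reading off the shifts directly from the definitions (\ref{X}) and (\ref{X'}) and verifying that colliding constraints only ever differ by an overall sign that matches $L\mapsto-L$, exactly as is forced for a local model to fail the analogous constraint (\ref{ConditionX}).
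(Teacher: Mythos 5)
Your proposal follows essentially the same route as the paper: write $I_{N,m,d}=\sum_n\hat\alpha_n\mathbbm{P}_n$ as in (\ref{NewHope}), bound each of the $2m^{N-1}$ normalized blocks by $\max_n\hat\alpha_n=\alpha_0$ (the paper cites the supplemental material of Ref.~\cite{our} for $\alpha_0\ge\alpha_n$, whereas you re-derive it from the monotonicity of the cotangent; both are fine), and then saturate the bound with a generalized PR-type box $p(\mathbf{a}|\mathbf{x})=d^{-(N-1)}\mathbbm{1}[\sum_j(-1)^{j-1}a_j\equiv f(\mathbf{x})]$ on the settings appearing in $\mathbbm{P}_0$, uniform elsewhere, with no-signaling following from the uniformity of all proper marginals.

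The one step you leave open --- the consistency of $f$ --- is not actually an obstacle, and you should close it rather than flag it. The map $\boldsymbol{\alpha}\mapsto(\alpha_1,\alpha_1+\alpha_2-1,\ldots,\alpha_{N-2}+\alpha_{N-1}-1,\alpha_{N-1})$ (settings reduced to $\{1,\ldots,m\}$) is injective, since $x_1$ determines $\alpha_1$ and then $\alpha_j\equiv x_j-\alpha_{j-1}+1 \pmod m$ is recovered recursively; likewise for the $\overline{X}$-settings, where $x_1=\alpha_1+1$. Moreover the two images are disjoint: equality of an $X$-setting for $\boldsymbol{\alpha}$ with an $\overline{X}$-setting for $\boldsymbol{\alpha}'$ forces, reading the constraints from $j=N$ down to $j=2$ (using $\alpha_N=\alpha_N'=1$), that $\alpha_j=\alpha_j'$ for all $j$, contradicting $\alpha_1\equiv\alpha_1'+1\pmod m$ when $m\ge 2$. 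Hence each physical setting receives at most one constraint and $f$ is well defined; the paper sidesteps the discussion entirely by simply writing $f$ and $\widetilde f$ in closed form as signed sums of Heaviside step functions $H(\alpha_i+\alpha_{i+1}-m-1)$ that record exactly the wrap-around shifts coming from the convention $A_{i,m+\gamma}=A_{i,\gamma}+1$. With that addition your argument matches the paper's proof.
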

\begin{proof}To prove this statement we use the form of $I_{N,m,d}$ given in Eq. (\ref{NewHope}).
As shown in Ref. \cite{our} (see the supplemental material), $\alpha_0\geq \alpha_n$ for any $0 \leq n \leq d-1$, and consequently one obtains the following bound by putting all the terms in $\mathbb{P}_0$ equal to one:
\begin{equation}\label{albound}
I_{N,m,d}\leq 2m^{N-1}\alpha_0.
\end{equation}
Now, there exists a nonsignaling probability distribution for which this inequality is saturated. For the first portion of measurement choices it is defined as
\begin{eqnarray}
&&\hspace{-1.5cm}p(a_1,\ldots, a_N|\alpha_1,\alpha_1+\alpha_2-1,\ldots,\alpha_{N-2}+\alpha_{N-1}-1,\alpha_{N-1})\nonumber\\
&&=
\left\{
\begin{array}{ll}
\displaystyle\frac{1}{d^{N-1}}, & \displaystyle\sum_{i=1}^{N}(-1)^{i-1} a_i
=f(\alpha_1,\ldots,\alpha_{N-1})\\[4ex]
0, & \mathrm{otherwise}
\end{array}
\right.,
\end{eqnarray}
for $\alpha_i=0,\ldots,m-1$ and $i=1,\ldots,N-1$, where
the function $f$ is defined as
\begin{equation}
f(\alpha_1, \ldots, \alpha_{N-1}) = \sum_{i = 1}^{N-2} (-1)^{i-1} H(\alpha_i + \alpha_{i+1} - m - 1).
\end{equation}
with $H$ being the discrete Heaviside step function, defined as $H(x) = 1$ if $x \geq 0$ and $H(x) = 0$ otherwise. This function $f$ is introduced to take into account the convention $A_{i,m+k}=A_{i,k}+1$, which modifies the condition defining the probabilities in the Bell expression. Indeed, looking at the expression (\ref{Pm}), one sees that if for all $i=1,\ldots,N-2$, $\alpha_i+\alpha_{i+1}-1\leq m-1$, then $f=0$, but if for some $j$'s, $\alpha_j+\alpha_{j+1}-1>m-1$, then $f$ could be different than $0$. 

Then, for the other portion of measurement choices it is defined as
\begin{eqnarray}
&&\hspace{-1.5cm}p(a_1,\ldots, a_N|\alpha_1+1,\alpha_1+\alpha_2-1,\ldots,\alpha_{N-2}+\alpha_{N-1}-1,\alpha_{N-1})\nonumber\\
&&=
\left\{
\begin{array}{ll}
\displaystyle\frac{1}{d^{N-1}}, & \displaystyle\sum_{i=1}^{N}(-1)^{i-1}a_i=\widetilde{f}(\alpha_1,\ldots,\alpha_{N-1})\\[4ex]
0, & \mathrm{otherwise}
\end{array}
\right.,
\end{eqnarray}
where 
the function $\widetilde{f}$ is defined in the same way as $f$, but also takes into account that $\alpha_1+1$ can be larger than $m-1$. Thus
\begin{equation}
\widetilde{f}(\alpha_1, \ldots, \alpha_{N-1}) = -H(\alpha_1 +1- m) + f(\alpha_1, \ldots, \alpha_{N-1}).
\end{equation}
For all the remaining choices of measurements we define
\begin{equation}
p(a_1,\ldots, a_N|\alpha_1,\ldots, \alpha_N)=\frac{1}{d^N}.
\end{equation}
Let us now recall the no-signalling principle for many parties. For the distribution of elements $p(a_1, \ldots, a_N | x_1, \ldots, x_N)$,  the marginal $p(a_{i_1}, \ldots, a_{i_k} | x_{i_1}, \ldots, x_{i_k})$ for any subset $\{i_1, \ldots, i_k\}$ of the $N$ parties should be independent of the measurement settings of the remaining $N-k$ parties:
\begin{equation}
p(a_{i_1}, \ldots, a_{i_k} | x_{1}, \ldots, x_{N}) = p(a_{i_1}, \ldots, a_{i_k} | x_{i_1}, \ldots, x_{i_k}).
\end{equation}
It is not difficult to verify that the distribution presented above obeys the no-signaling principle. Tracing out a single subsystem one always obtains a maximally random probability distribution.
\end{proof}

Let us notice that in Theorem \ref{quantum_thm} we compute the maximal quantum value of our Bell expression in the correlator representation (\ref{BellExpNCC}), whereas in Theorem 
\ref{ns_thm} we compute the maximal nonsignalling value in the probability picture. 
To obtain these values in the other picture one can use Eq. (\ref{mapping}).

Let us also notice that both these values are related to the same 
values of the bipartite SATWAP Bell inequality by a factor $m^{N-2}$. 
Thus, the results about the relative scaling of these bounds 
from Ref. \cite{our} holds for many parties. In particular 
$\lim_{d\to\infty}\beta_{\mathcal{NS}}^{N,m,d}/\beta_{\mathcal{Q}}^{N,m,d}=(2m/\pi)\tan(\pi/2m)$, and so the separation between the maximal nonsignaling and quantum values becomes smaller for larger $m$.

On the other hand, the classical value $\beta_{\mathcal{L}}^{N,m,d}$ does not seem
to obey $\beta_{\mathcal{L}}^{N,m,d}=m^{N-2}\beta_{\mathcal{L}}^{2,m,d}$ (cf Tabs. \ref{TabelkaI} and \ref{TabII}), and therefore one can expect that the behaviour
of the ratio $\beta_{\mathcal{Q}}^{N,m,d}/\beta_{\mathcal{L}}^{N,m,d}$
will exhibit a behaviour for large $d$ or $m$ different from $\beta_{\mathcal{Q}}^{2,m,d}/\beta_{\mathcal{L}}^{2,m,d}$ (cf. Ref. \cite{our}).

\subsection{Special cases}

Let us here briefly discuss the form of our Bell expressions in the special cases of $d=2$ and any $m\geq 2$, and $m=2$ and any $d\geq 2$. In the first one, Eq. (\ref{BellExpProN}) simplifies to 
\begin{equation}
I_{N,m,2}=\alpha_0\mathbbm{P}_0,
\end{equation}
where $\alpha_0=1/[2\cos(\pi/2m)]$ (notice also that $\beta_0=0$).
Then, in the correlator picture there is a single number $a_1=\alpha_0=1/[2\cos(\pi/2m)]$ and therefore 
\begin{equation}
\overline{\mathscr{A}}_{\alpha_1}^{(1)}=a_1(\mathscr{A}_{\alpha_1}+\mathscr{A}_{\alpha_1+1}),
\end{equation}
where $\mathscr{A}_{1,m+1}=-\mathscr{A}_{1,1}$. Then, the Bell inequality
in the correlator picture can be written as
\begin{equation}
\widetilde{I}_{N,m,2}=\frac{a_1}{2}\sum_{\alpha_1,\ldots,\alpha_{N-1}=1}^{m}
\left[\left\langle \mathscr{A}_{1,\alpha_1}\prod_{i=2}^{N} \mathscr{A}_{i,\alpha_{i-1} + \alpha_{i} - 1}\right\rangle+\left\langle \mathscr{A}_{1,\alpha_1+1}\prod_{i=2}^{N} \mathscr{A}_{i,\alpha_{i-1} + \alpha_{i} - 1}\right\rangle\right]\leq \beta_{\mathcal{L}}^{N,m,2},
\end{equation}
where $\alpha_N=1$. This a generalization of the bipartite
chained Bell inequalities \cite{chained} to the multipartite scenario (see also Ref. \cite{AGCA} for an extension in a similar spirit, albeit in which the GHZ state does not yield the maximal violation in general). 
In fact, for $N=2$, after dividing by $a_1/2$, one obtains 
\begin{equation}
\widetilde{I}_{2,m,2}=\sum_{\alpha=1}^m\left[\langle \mathscr{A}_{\alpha}\mathscr{B}_{\alpha}\rangle+\langle \mathscr{A}_{\alpha+1}\mathscr{B}_{\alpha}\rangle\right]\leq 2(m-1),
\end{equation}
where $\mathscr{A}_{m+1}=-\mathscr{A}_{1}$, which is the chained Bell inequality \cite{chained}.

In the case of $m=2$ and any $d$, $I_{N,2,d}$ is given in 
Eq. (\ref{BellExpProN}) with the coefficients in the probability picture simplifying to
\begin{equation}
\alpha_k=\frac{1}{2d}\left[g_2(k)+(-1)^d\tan\left(\frac{\pi}{4d}\right)\right],\qquad
\beta_k=\frac{1}{2d}\left[g_2(k+1/2)-(-1)^d\tan\left(\frac{\pi}{4d}\right)\right],
\end{equation}
whereas in the correlator picture to $a_k=\omega^{(2k-8)/d}/\sqrt{2}$.

\section{Classes of inequalities tailored to partially entangled states}
\label{sec:generalizations}

In this section, we investigate whether Bell inequalities of the form (\ref{BellExpProN}) can be tailored to give a class of inequalities maximally violated by partially entangled states. This is a natural question to ask, given that in the case of two parties, the CGLMP \cite{CGLMP} and the SATWAP \cite{our} inequalities are maximally violated by different entangled states, and are both of the form (\ref{BellExpProN}) with different coefficients $\alpha_m$ and $\beta_m$. We first present the case $N=2$, $d=3$ which was already studied in \cite{Science}, and then consider extensions to new cases $N=3,4$ and $d=4$.

\subsection{$N = m = 2$, $d=3$}
In this special case, Eq. (\ref{BellExpProN}) gives a class of Bell inequalities involving two parameters
$\alpha_0\mathbbm{P}_0-\beta_0\mathbbm{Q}_0\leq C$, where $C$ is the maximal classical value.
However, we can always divide the whole expression by one of them, say $\alpha_0$ (provided that it is positive), reducing the number of free parameters to one.
As a result we obtain the following class of Bell inequalities
\begin{eqnarray}\label{class}
J_{2,2,3}(\xi)&\!\!\!:=\!\!\!&P(A_1=B_1)+P(A_2=B_2)+P(A_1=B_2-1)+P(A_2=B_1)\nonumber\\
&&-\xi[P(A_1=B_1-1)+P(A_2=B_2-1)+P(A_1=B_2)+P(A_2=B_1+1)]\leq C_3(\xi),\nonumber\\
\end{eqnarray}
parametrized by a single parameter $\xi$, defined 
as  $\xi = \beta_0/\alpha_0$. It turns out
that the classical bound of these inequalities can be easily found by maximizing $J_{2,2,3}(\xi$) over all local deterministic strategies, which gives 
\begin{equation}
C_3(\xi)=\left\{
\begin{array}{ll}
-4\xi, & \mathrm{if}\,\,\, \xi\leq -1,\\
3-\xi, & \mathrm{if}\,\,\, -1\leq \xi\leq 1,\\
2, & \mathrm{if}\,\,\, \xi\geq 1.
\end{array}
\right.
\end{equation}
Moreover, numerical tests using the Navascu\'es-Pironio-Ac\'in (NPA) hierarchy \cite{npa2007}
indicate that for $\xi \leq -1$, the Bell inequality (\ref{class}) is trivial, meaning that
its maximal quantum violation equals its classical bound. Consequently, in what follows
we will concentrate on the case $\xi>-1$. It is not difficult to see that for $\xi=1$ the class (\ref{class}) reproduces the well-known CGLMP Bell inequality \cite{CGLMP}, which is known to be maximally violated by the partially entangled state \cite{Durt}:
\begin{equation}\label{partially}
\ket{\psi_\gamma}=\frac{1}{\sqrt{2+\gamma^2}}(\ket{00}+\gamma\ket{11}+\ket{22})
\end{equation}
with $\gamma=(\sqrt{11}-\sqrt{3})/2$, whereas for $\xi=(\sqrt{3}-1)/2$ it gives the SATWAP Bell inequality. In both cases the optimal CGLMP observables (expression (\ref{measurements0}) for $N=m=2$) are used.

The question we want to answer now is whether by changing $\xi$ between the above two values we can obtain Bell inequalities maximally violated by partially entangled states (\ref{partially}) for various values of $\gamma$. To answer this question let us first take
the observables (\ref{measurements0}) and compute the value of the Bell expression for the state (\ref{class}). This gives us the following function of $\xi$ and $\gamma$:
\begin{equation}\label{expression}
\mathcal{J}(\xi,\gamma)=\frac{4}{3}\frac{3+\gamma(2\sqrt{3}+\gamma-\xi\gamma)}{2+\gamma^2}.
\end{equation}
To find its maximal value for a fixed $\xi$, we need to satisfy the following condition
$\partial \mathcal{J}(\xi,\gamma)/\partial \gamma=0$, which is equivalent to finding the root of a second degree polynomial in $\gamma$. The maximal value of $\mathcal{J}(\xi,\gamma)$ is found to be at 
$\gamma_{+}(\xi)=[(4\xi^2+4\xi+25)^{1/2}-2\xi-1]/2\sqrt{3}$,
and it is given by 
\begin{equation}\label{expression2}
\mathcal{J}_{\max}(\xi)=\frac{1}{3}\left[5 - 2 \xi + \sqrt{25 + 4 (\xi+1) \xi}\right].
\end{equation}

Of course, the above derivation is not a proof that, for a given $\xi$, $\mathcal{J}_{\mathrm{max}}(\xi)$ is the maximal quantum violation of the Bell inequality (\ref{class}), however, based on our numerical study we conjecture this to be the case. Notice first that for $\xi=1$ and $\xi=(\sqrt{3}-1)/2$, the expression
(\ref{expression2}) reproduces the maximal quantum violations of the CGLMP and SATWAP Bell inequalities, respectively. Then, we have tested our conjecture for other values of $\xi$ by using the NPA  hierarchy, which we implemented using the Yalmip toolbox \cite{yalmip} and the SeDuMi solver \cite{sedumi} in Matlab. The NPA hierarchy provides outer approximations to the quantum set of correlations, and for a given Bell inequality, it allows one to find an upper bound on the maximal quantum violation of the Bell esxpression.  We employed this technique for values of $\xi\in[-0.99,100]$ with the step 0.01, and for all these values of $\xi$ the value obtained agrees with (\ref{expression2}) up to solver precision $~10^{-8}$, which is a strong implication that it is the maximal quantum violation of the corresponding inequality. Note that for $\xi\in[-0.99,42]$, the level $1+AB$ of the hierarchy was sufficient, while for $\xi\in[42,100]$ we used the level $2$, except for a small amount of values in the interval $[85,100]$ for which the level $2 + AAB$ was necessary. 

In \cite{Science}, we also showed how this class of inequalities can be used to self-test partially entangled states using the method of Ref. \cite{selftesting1}.

\subsection{Extension to $N = 3,4$}
The extension of the last section to more parties turns out to be straightforward. We follow the same procedure: we start from $(\ref{BellExpProN})$ for $N=3,4$, $m=2$, $d=3$, and divide it by one the parameters so that we obtain a one-parameter class of Bell expressions
\begin{eqnarray}
J_{N,2,3} (\xi) = \mathbbm{P}_0- \xi \mathbbm{Q}_0, 
\end{eqnarray}
with $N=3,4$ (let us notice here that both $\mathbbm{P}_0$
and $\mathbbm{Q}_0$ defined in Eqs. (\ref{Pm}) and (\ref{Qm}) depend on $N$). We can compute the classical bound of these expressions, obtaining
 \begin{eqnarray}
C_{3,2,3}(\xi)=\left\{
\begin{array}{ll}
-8\xi, & \mathrm{if}\,\,\, \xi\leq -1,\\
2(3-\xi), & \mathrm{if}\,\,\, -1\leq \xi\leq 1,\\
4, & \mathrm{if}\,\,\, \xi\geq 1
\end{array}
\right. 
\end{eqnarray}
and
\begin{eqnarray}
C_{4,2,3}(\xi)=\left\{
\begin{array}{ll}
-16\xi, & \mathrm{if}\,\,\, \xi\leq -10/11,\\
10-5\xi, & \mathrm{if}\,\,\, -10/11 \leq \xi\leq 2/5,\\
8, & \mathrm{if}\,\,\, \xi\geq 2/5.
\end{array}
\right.
\end{eqnarray}
 
Let us now consider the following partially entangled GHZ states
\begin{equation}
|\text{GHZ}^{(N)}_{\gamma}\rangle = \frac{1}{\sqrt{2 + \gamma^2}} (|0\rangle^{\otimes N} + \gamma |1\rangle^{\otimes N} + |2\rangle^{\otimes N}).
\end{equation}
As in the previous subsection, we compute the values of $\mathcal{J}^{(N)}(\xi,\gamma)$ $(N=3,4)$ for the corresponding partially entangled GHZ states and the measurements (\ref{measurements0}) and (\ref{measurements}), and then we solve $\partial \mathcal{J}^{(N)}(\xi,\gamma)/\partial \gamma=0$ to obtain the optimal
\begin{equation}\label{gamma_multi}
\gamma^{(3)}(\xi)= \gamma^{(4)}(\xi) = \frac{\sqrt{4\xi^2+4\xi+25}-2\xi-1}{2\sqrt{3}},
\end{equation}
which is the same value as for $N=2$. Substituting (\ref{gamma_multi}) into the values of the Bell expressions, one obtains
\begin{equation}
\mathcal{J}^{(3)}_{\max}(\xi) = 2 (1 + 2\xi + \sqrt{25 + 4(\xi + 1)\xi}), \label{maxi3}
\end{equation}
and $\mathcal{J}^{(4)}_{\max}(\xi)=2\mathcal{J}^{(3)}_{\max}(\xi)$.
We conjecture that they are the maximal quantum violations of $J_{3,3,2} (\xi) $ and $J_{4,3,2} (\xi)$, respectively.

To support this conjecture, we use the NPA hierarchy. With the change of scenario, it takes more time to solve each SDP, so we do not check as many values of $\xi$ as in the section above. For $N=3$, we checked values of $\xi \in [-1,5]$ with step $0.1$ and found that the gap was of order $10^{-7}$ or lower.  For $N=4$, we checked values of $\xi \in [-1,2]$ with step $0.5$ and found that the gap was of order $10^{-8}$ or lower.

\section{Conclusion}
\label{sec:conclusion}
In this work, we have designed a new family of Bell inequalities in the most general scenario involving $m$ $d$-outcome measurements per observer such that the GHZ state of $N$ qudits maximally violates it, for any $N$ and $d$. Whereas the natural approach towards finding new, useful, families of Bell inequalities is typically based on exploiting the geometry of the set of local correlations (i.e., trying to characterize the facets of the so-called local polytope), tailoring Bell inequalities to quantum states of interest has proven to be a much more successful approach towards the certification of quantum properties of these states \cite{our,Jed,Science}. This shift of approach is perhaps surprising, as CHSH inequality, the simplest non-trivial Bell inequality, possesses many of the properties one desires to certify in practice (e.g. self-testing the singlet state of two qubits). However, one should have in mind that there is no \textit{a-priori} reason why these desirable properties of CHSH should be inherited in more complicated Bell scenarios, simply because local hidden variable theories have nothing to do with quantum theory.

Therefore, in order to certify, in a device-independent manner, properties of quantum states of interest, the roadmap we here suggest looks like a much more promising approach: one generates a probability distribution in the set of quantum correlations that is extremal and exposed (i.e., is the unique maximizer of a Bell functional) and certifies this maximal violation by giving a sum of squares decomposition of the Bell operator. We note that, although in our approach the difficulty of computing the maximal quantum bound is removed, by construction, now finding the classical bound of such inequality becomes in general a non-trivial task. In our work we have computed it exactly in the simplest Bell scenarios with the aid of numerics. Observe, however, that in order to certify the quantum properties of interest, it is not necessary to compute exactly the classical bound of the inequality, and a relaxation of its value (e.g. given by an outer approximation of the local polytope e.g. \cite{FlavioPRX, ThetaBodies}) will suffice.

We have also shown that our method can be adapted to other families of GHZ-like states, in analogy to non-maximally entangled states of two qudits \cite{our,Science}. This is possible because our method is fully analytical, thus enabling us to further introduce analytical parameters and obtain the result only using elementary differential geometry techniques.

Finally, the inequalities we here present can be tested with currently-available technology. In the bipartite case \cite{our}, their application had already been shown in an integrated photonics device \cite{Science}.

\section{Acknowledgments}
\label{sec:acknos}
J. T. thanks the Alexander von Humboldt foundation for support. 
R. A. acknowledges the support from the Foundation for Polish Science through the First TEAM project (First TEAM/2017-4/31) co-financed by the European Union under the European Regional Development Fund.
A. A. acknowledges support from the ERC CoG QITBOX, the AXA Chair in Quantum Information Science, the Spanish MINECO (QIBEQI FIS2016-80773-P and Severo Ochoa SEV-2015-0522), Fundaci\'o Cellex, Generalitat de Catalunya (SGR 1381 and CERCA Programme).

\appendix

\section{Proof that the correlators in Eq. (\ref{eq:ToBeProvedInTheAppendix}) are equal}

Here our aim is to show that for the measurements (\ref{measurements0}) and (\ref{measurements}) and the GHZ state, all the probabilities in Eqs. (\ref{Pm}) and (\ref{Qm}), that is, 
\begin{equation}
P(X_{\alpha_1,\ldots,\alpha_{N-1}}=k)\qquad\mathrm{and}\qquad
P(\overline{X}_{\alpha_1,\ldots,\alpha_{N-1}}=k)
\end{equation}
with $k=0,\ldots,d-1$ (recall that the equalities in the arguments of these probabilities are modulo $d$), are independent of the choice of $\alpha_1,\ldots,\alpha_{N-1}$
and are equal for any $k=0,\ldots,d-1$. 

To this end, let us first notice that the eigenvectors of the observables 
(\ref{measurements0}) and (\ref{measurements}) can be written as
(cf. \cite{CGLMP} and \cite{Aolita})
\begin{eqnarray}
\ket{a_{1},x_1}&=&\frac{1}{\sqrt{d}}\sum_{q=0}^{d-1}\omega^{q\left[a_1-\gamma_{m}(x_1)\right]},\\
\ket{a_2,x_2}&=&\frac{1}{\sqrt{d}}\sum_{q=0}^{d-1}\omega^{-q\left[a_2-\zeta_{m}(x_2)\right]},
\end{eqnarray}
for the first two observers, and
\begin{equation}
\ket{a_i,x_i}=\frac{1}{\sqrt{d}}\sum_{q=0}^{d-1}\omega^{(-1)^{N+1}q\left[a_i-\theta_{m}(x_i)\right]}
\end{equation}
for $i=3,\ldots,N$. Recall that in these formulas $\gamma_m(x)=(x-1/2)/m$, $\zeta_m(x)=x/m$,
and $\theta_m=(x-1)/m$. This means that the joint probability of obtaining $a_i$ 
by party $A_i$ upon measuring the observable $\mathscr{A}_{i,x_i}$ on the state $|\mathrm{GHZ}_{N,d}\rangle$ reads
\begin{eqnarray}\label{probs}
&&\hspace{-0.5cm}p(a_1,\ldots,a_N|x_1,\ldots,x_N)\nonumber\\
&&=\left(\frac{1}{\sqrt{d}}\right)^{N+1}
\left|\sum_{q=0}^{d-1}\exp\left\{\frac{2\pi\mathbbm{i}}{d}q\left[\sum_{i=1}^{N}(-1)^{i+1}a_i-\left(\gamma_m(x_1)-\zeta_m(x_2)+\sum_{i=3}^{N}(-1)^{i+1}\theta_m(x_i)\right)\right]\right\}\right|^2\nonumber\\
&&=\left(\frac{1}{\sqrt{d}}\right)^{N+1}
\left|\sum_{q=0}^{d-1}\exp\left\{\frac{2\pi\mathbbm{i}}{d}q\left[\sum_{i=1}^{N}(-1)^{i+1}a_i-\frac{1}{m}\sum_{i=1}^{N}(-1)^{i+1}x_i+\Lambda_{N,m}\right]\right\}\right|^2,
\end{eqnarray}
where we denoted $\Lambda_{N,m}=[2-(-1)^N]/(2m)$. 

Let us then concentrate on $P(X_{\alpha_1,\ldots,\alpha_{N-1}}=k)$ and consider first the case when $\alpha_{i-1}+\alpha_i-1\leq m$ for $i=2,\ldots,N-1$. Substituting then 
$x_1=\alpha_1$, $x_i=\alpha_{i-1}+\alpha_i-1$ with $i=2,\ldots,N-1$ and $x_N=\alpha_N$ 
in Eq. (\ref{probs}), we can write
\begin{eqnarray}\label{formula}
P(X_{\alpha_1,\ldots,\alpha_{N-1}}=k)&=&\left(\frac{1}{\sqrt{d}}\right)^{N+1}
\sum_{\substack{a_1,\ldots,a_N=0, \\\sum_{i}(-1)^{i+1}a_i=k}}^{d-1}\left|\sum_{q=0}^{d-1}\exp\left\{\frac{2\pi\mathbbm{i}}{d}q\left[\sum_{i=1}^{N}(-1)^{i+1}a_i+\Lambda'_{m}\right]\right\}\right|^2\nonumber\\
&=&\left(\frac{1}{\sqrt{d}}\right)^{N+1}
\sum_{\substack{a_1,\ldots,a_N=0, \\\sum_{i}(-1)^{i+1}a_i=k}}^{d-1}\left|\sum_{q=0}^{d-1}\exp\left[\frac{2\pi\mathbbm{i}}{d}q\left(k+\Lambda'_{m}\right)\right]\right|^2,\nonumber\\
&=&\left(\frac{1}{\sqrt{d}}\right)^{N+1} d^{N-1}\left|\sum_{q=0}^{d-1}\exp\left[\frac{2\pi\mathbbm{i}}{d}q\left(k+\Lambda'_{m}\right)\right]\right|^2,
\end{eqnarray}
where $\Lambda'_{m}=\Lambda_{N,m}-[1+(-1)^{N+1}]/(2m)=1/(2m)$, and to get the last line we used the fact that the expression under the first sum does not depend on $a_i$ and that due to the constraint there are $d^{N-1}$ elements in that sum. Clearly, the expression appearing on the right-hand side of the above formula does not depend on $\alpha_i$. 

Let us then assume that for some $i$, $\alpha_{i-1}+\alpha_i-1>m$. For all such $i$'s 
we use the convention that $A_{i,m+\gamma}=A_{i,\gamma}+1$, which implies that 
$P(X_{\alpha_1,\ldots,\alpha_{N-1}}=k)=P(X'_{\alpha_1,\ldots,\alpha_{N-1}}=k+f(\alpha))$
where in $X'$,  $\alpha_{i-1}+\alpha_i-1$ are replaced by $\alpha_{i-1}+\alpha_i-1-m$
for all those $i$'s for which $\alpha_{i-1}+\alpha_i-1>m$.
\begin{eqnarray}\label{formula2}
P(X'_{\alpha_1,\ldots,\alpha_{N-1}}=k+f(\mathbf{\alpha}))&=&\left(\frac{1}{\sqrt{d}}\right)^{N+1}\nonumber\\
&&\times\hspace{-1cm}
\sum_{\substack{a_1,\ldots,a_N=0, \\\sum_{i}(-1)^{i+1}a_i=k+f(\alpha)}}^{d-1}\left|\sum_{q=0}^{d-1}\exp\left\{\frac{2\pi\mathbbm{i}}{d}q\left[\sum_{i=1}^{N}(-1)^{i+1}a_i-f(\alpha)+\Lambda'\right]\right\}\right|^2\nonumber\\
&=&\left(\frac{1}{\sqrt{d}}\right)^{N+1}\sum_{\substack{a_1,\ldots,a_N=0, \\\sum_{i}(-1)^{i+1}a_i=k+f(\alpha)}}^{d-1}\left|\sum_{q=0}^{d-1}\exp\left[\frac{2\pi\mathbbm{i}}{d}q\left(k+\Lambda'\right)\right]\right|^2\nonumber\\
&=&\left(\frac{1}{\sqrt{d}}\right)^{N+1}d^{N-1}\left|\sum_{q=0}^{d-1}\exp\left[\frac{2\pi\mathbbm{i}}{d}q\left(k+\Lambda'\right)\right]\right|^2,
\end{eqnarray}
where to the third line follows from the fact that, as above, the expression under
the sum does not depend on $a_i$'s and that there are $d^{N-1}$ terms in that sum.
Again, this formula does not depend on $\alpha_i$'s and equals the one in 
Eq. (\ref{formula}).

In a similay way one proceeds with 
$P(\overline{X}_{\alpha_1,\ldots,\alpha_{N-1}}=k)$. There are, however, two differences with respect to the previous case: first, in Eq. (\ref{probs}) one substitutes $x_1=\alpha_1+1$ instead of $x_1=\alpha_1$, second, the condition for outcomes in Eq. (\ref{formula}) modifies to $\sum_{i}(-1)^{i+1}a_i=-k$. Nevertheless, after some calculations one finds that 
$P(\overline{X}_{\alpha_1,\ldots,\alpha_{N-1}}=k)=P(X_{\alpha_1,\ldots,\alpha_{N-1}}=k)$ for any $k$ and any choice of measurements $\alpha_i$, which is what we wanted to show.
%
%
%
%
%
%

\iffalse
\ \\
To see this explicitly, let 
$x_1=\alpha_1$, $x_i=\alpha_{i-1}+\alpha_i-1$ with $i=2,\ldots,N-1$
and $x_N=\alpha_N$, and let us first consider the case when 
$\alpha_{i-1}+\alpha_{i}-1\leq m$ for all $i=2,\ldots,N-1$. Then, it is not difficult to 
see that 
%
\begin{equation}
\sum_{i=1}^N(-1)^{i+1}x_i=\frac{1}{2}[1+(-1)^N],
\end{equation}
%
which is independent of $\alpha_i$. Let us now consider the possibility that 
$\alpha_{i-1}+\alpha_i-1>m$ for some $i$'s. For each such $i$, we use the convention that $A_{i,m+\gamma}=A_{i,\gamma}+1$, which means that in Eq. (\ref{probs}) we replace $a_i$ by $a_i+1$ and $x_i=\alpha_{i-1}+\alpha_i-1$ by $x_i-m$. However, 
\fi

\end{document}